\newcommand{\xb}{\boldsymbol{x}}
\newcommand{\Xb}{\boldsymbol{X}}
\newcommand{\barxb}{\bar{\boldsymbol{x}}}
\newcommand{\hatxb}{\hat{\boldsymbol{x}}}
\newcommand{\Xc}{\mathcal{X}}
\newcommand{\Xcn}{\mathcal{X}^n}
\newcommand{\yb}{\boldsymbol{y}}
\newcommand{\Yb}{\boldsymbol{Y}}
\newcommand{\Yc}{\mathcal{Y}}
\newcommand{\Ycn}{\mathcal{Y}^n}
\newcommand{\Ac}{\mathcal{A}}
\newcommand{\Ccb}{\boldsymbol{\mathcal{C}}}
\newcommand{\Cc}{\mathcal{C}}
\newcommand{\Sc}{\mathcal{S}}
\newcommand{\Pc}{\mathcal{P}}
\newcommand{\Tc}{\mathcal{T}}
\newcommand{\Mc}{\mathcal{M}}
\newcommand{\Csf}{\mathsf{C}}
\newcommand{\Csfb}{\boldsymbol{\mathsf{C}}}
\newcommand{\RR}{\mathbb{R}}
\newcommand{\PP}{\mathbb{P}}
\newcommand{\EE}{\mathbb{E}}
\newcommand{\indicator}{\mathds{1}}
\DeclareMathOperator*{\argmax}{arg\,max}
\newtheorem{lemma}{Lemma}
\newtheorem{theorem}{Theorem}
\newtheorem{proposition}{Proposition}
\newtheorem{corollary}{Corollary}
\theoremstyle{definition}
\newtheorem{definition}{Definition}
\begin{document}
\title{Dual Domain Expurgated Error Exponents for Source Coding with Side Information}

\author{Mehdi Dabirnia, Hamdi Joudeh and Albert~Guill\'en i F\`abregas
\thanks{Mehdi Dabirnia is with Centre Tecnol\`ogic de Telecomunicacions de Catalunya (CTTC), 08860 Castelldefels, Spain; e-mail: {\tt mehdi.dabirnia@cttc.cat}.
Hamdi Joudeh is with the Department of Electrical Engineering, Eindhoven University of Technology, Eindhoven, 5600 MB, The Netherlands; e-mail: {\tt h.joudeh@tue.nl}.
Albert Guill\'en i F\`abregas is with the Department of Engineering, University of Cambridge, CB2 1PZ Cambridge, U.K., the Department of Signal Theory and Communications and the Institute of Mathematics (IMTech), Universitat Polit\`ecnica de Catalunya (UPC) 08034 Barcelona, Spain (\tt guillen@ieee.org).
}
\thanks{This work was supported in part by the European Research Council under Grants 101116550, 101142747 and 101158232, and in part by the Spanish Ministry of Economy and Competitiveness under Grants PID2020-116683GB-C22 and PID2021-128373OB-I00 and partially by the WAVE-XL project (PID2024-160457OB-I00) funded by MCIN/AEI/10.13039/501100011033 and by FEDER, UE.}
}

\date{\today}
\maketitle

\vspace{-1cm}
\begin{abstract}
We introduce an expurgation method for source coding with side information that enables direct dual-domain derivations of expurgated error exponents. Dual-domain methods yield optimization problems over few parameters, with any sub-optimal choice resulting in an achievable exponent, as opposed to primal-domain optimization over distributions. In addition, dual-domain methods naturally allow for general alphabets and/or memory. We derive two such expurgated error exponents for different random-coding ensembles in the case where the decoder is possibly mismatched with respect to the source and side information joint distribution. We show the better of the exponents coincides with the Csisz\'ar-K\"orner exponent obtained via a graph decomposition lemma. We show some numerical examples that illustrate the differences between the two exponents and show that in the case of source coding without side information, the expurgated exponent coincides with the error exponent of the source optimal code.
\end{abstract}


%

\newpage
\section{Introduction}
\label{sec:intro}

Consider a pair of discrete memoryless correlated sources with finite alphabets $\mathcal{X}$ and $\mathcal{Y}$, and joint distribution $P_{XY}$.
The pair of i.i.d. sequences of length $n$ generated at random by the two sources are denoted by 
$\Xb =  X_1, X_2,\ldots, X_n $
and 
$\Yb =  Y_1, Y_2,\ldots, Y_n $, while realizations are denoted with a lowercase font as $\xb =  x_1, x_2,\ldots, x_n $ and $\yb =  y_1, y_2,\ldots, y_n $.
An encoder observes $\Xb$ and maps it into a codeword (or bin) $m$, chosen from a set of $M$ codewords. A standard block source code $\Cc(n,R)$ is defined as a mapping $\phi:\Xcn\to\Mc$ of source sequences $\xb\in\Xcn$ to a set of codewords $\Mc=\{1,\dotsc,M\}$ where $R=\frac{\log M}{n}$ is the code rate. At the receiver end, the decoder observes the codeword $m$ and, along with the side information $\yb$, provides an estimate of the original source sequence. More formally, the decoder $\psi:\Mc\times\Ycn\to\Xcn$ maps each codeword and side information sequence back into a source sequence $\hatxb$. The decoder makes an error whenever $\psi(\phi(\xb),\yb)\neq \xb$ and the probability of error is given by
\begin{equation}
p_e = \PP[\psi(\phi(\Xb),\Yb)\neq \Xb].
\end{equation}
This setting is known as block 
source coding with decoder side information, or Slepian-Wolf coding \cite{Slepian1973,gallager1979source}. In this paper, we are interested in the exponential decay rate of the decoding error probability with block-length $n$, known as the error exponent or reliability function.

In \cite{gallager1979source}, Gallager derived an achievable error exponent for the above setting using random coding, or random binning, and maximum a posteriori probability (MAP) decoding. In Gallager’s
random coding ensemble, source sequences are mapped to codewords (or bins) uniformly at random in
a pairwise independent manner. The corresponding achievable exponent is
\begin{equation}
\label{eq:Er_dual}
    E_{\mathrm{r}}(R) = \max_{\rho \in [0,1]}  \rho R -  E_{0}(\rho)
\end{equation}
where the function $E_{0}(\rho)$ is defined as
\begin{equation}
\label{eq:E0_function}
    E_{0}(\rho) \triangleq \log\sum\limits_{y \in \mathcal{Y}} P_Y(y)\bigg( \sum\limits_{x \in \mathcal{X}} P_{X|Y}(x|y)^{\frac{1}{1+\rho}}  \bigg)^{1+\rho} .
\end{equation}
The function $E_{0}(\rho)$ is known to be related to Arimoto’s conditional Rényi entropy as $E_{0}(\rho) = \rho H_{\frac{1}{1+\rho}}(X|Y)$ \cite{arimoto1977information}.
We refer to $E_{\mathrm{r}}(R)$ as the \emph{random-coding error exponent}, due to close resemblance in proof and form to the random-coding error exponent in channel coding \cite{Gallager1965,gallager1968ita}. 
The form in \eqref{eq:Er_dual} is known as a \emph{dual domain} expression.
An equivalent \emph{primal domain}\footnote{Dual domain expressions can be derived from their primal domain counterparts using Lagrange duality techniques.} expression for $ E_{\mathrm{r}}(R)$ is given as
\begin{equation}
\label{eq:Er_primal}
    E_{\mathrm{r}}(R) = \min_{P_{\tilde{X}\tilde{Y}}} D(P_{\tilde{X}\tilde{Y}} \| P_{XY}) + \big| R - H(\tilde{X}|\tilde{Y}) \big|^{+} 
\end{equation}
where the minimization is over all joint distributions $P_{\tilde{X}\tilde{Y}}$ on $\mathcal{X} \times \mathcal{Y}$. 
A direct achievability proof that yields the primal domain expression in \eqref{eq:Er_primal} is given by Csisz\'{a}r and K\"{o}rner in \cite{Csiszar1980}, where type analysis and universal minimum entropy (ME) decoding are employed.

While $ E_{\mathrm{r}}(R)$ provides a lower bound on the reliability function, Gallager \cite{gallager1979source} also derived a corresponding upper bound, which takes the following form
\begin{equation}
\label{eq:Esp_dual}
    E_{\mathrm{sp}}(R) = \sup_{\rho \geq 0}  \rho R - E_{0}(\rho).
\end{equation}
We will refer to this as the \emph{sphere-packing exponent}, as it closely resembles the sphere-packing exponent in channel coding \cite{Shannon1967}.
The sphere-packing exponent $E_{\mathrm{sp}}(R) $ is derived by providing the side information sequence $\Yb$ to the encoder, and is in fact equal to the error exponent of conditional source coding, where side information is available to both encoder and decoder.
The sphere packing exponent admits an equivalent primal domain expression \cite{Csiszar1980} given as
\begin{equation}
\label{eq:Esp_primal}
    E_{\mathrm{sp}}(R) =\min_{P_{\tilde{X}\tilde{Y}} : H(\tilde{X}|\tilde{Y}) \geq R } D(P_{\tilde{X}\tilde{Y}} \| P_{XY}).
\end{equation}

Both $E_{\mathrm{r}}(R)$ and $E_{\mathrm{sp}}(R)$ are non-negative for rates in the range $H(X|Y)  < R < \log |\mathcal{S}(X)|$, where $\mathcal{S}(X) \subseteq \mathcal{X}$ is the support of $P_X$; and are equal to zero for $0 \leq R \leq H(X|Y)$.
Moreover, $E_{\mathrm{r}}(R)$ and $E_{\mathrm{sp}}(R)$ coincide in the range $H(X|Y) < R \leq R_{\mathrm{cr}}$, where $R_{\mathrm{cr}}$ is the largest rate at which the convex curve $E_{\mathrm{sp}}(R)$ meets its supporting line of slope 1. Note that $R_{\mathrm{cr}}$ is reminiscent of the \emph{critical rate} in channel coding.
\subsection{Expurgated exponent}
For a range of rates above the critical rate $R_{\mathrm{cr}}$, a tighter achievable error exponent was derived by Csisz{\'a}r and K{\"o}rner in \cite{Csi81}, and is given by
\begin{align}
\label{eq:Eex_primal}
E_{\rm{ex}}(R)=\min_{P_{\tilde{X}}} \bigg \lbrace D(P_{\tilde{X}}\|P_X)  + \min_{P_{\hat{X}\tilde{X}}:P_{\hat{X}}=P_{\tilde{X}}, H(\hat{X}|\tilde{X})\geq R}
\Big\lbrace\mathbb{E}\big[d(\hat{X},\tilde{X})\big]  + R  -  H(\hat{X}|\tilde{X})\Big\rbrace\bigg\rbrace
\end{align}
where $d(\hat{x},\tilde{x})$ is the Bhattacharyya distance between $P_{Y|X}(\cdot|\hat{x})$ and $P_{Y|X}(\cdot|\tilde{x})$, defined as 
\begin{equation}
 d(\hat{x},\tilde{x}) \triangleq - \log \sum\limits_{y \in \mathcal{Y}} \sqrt{ P_{Y|X}(y|\hat{x}) P_{Y|X}(y|\tilde{x}) }.  
\end{equation}
The exponent $E_{\rm{ex}}(R)$ is often referred to as the \emph{expurgated exponent}, since it can be seen as a source coding counterpart to the expurgated exponent in channel coding when expressed in the primal domain \cite{Csi81}, \cite[Problem 10.18]{Csiszar2011}. 
Relating the Slepian-Wolf source coding problem to a counterpart channel coding problem with input $\Xb$ and output $\Yb$ is in fact the first step in the proof of Csisz{\'a}r and K{\"o}rner \cite{Csi81}, as well as a later proof by Ahlswede and Dueck \cite{Ahlswede1982}.

In deriving the expurgated exponent $E_{\rm{ex}}(R)$, Csisz{\'a}r and K{\"o}rner \cite{Csi81} employ a type-by-type block coding scheme, in which source sequences assigned to the same codeword (or bin) all have the same type. 
Since source sequences of the same type have the same probability, optimal MAP decoding reduces to maximum likelihood (ML) decoding in this case, which was used to derive \eqref{eq:Eex_primal}.
In the code construction, instead of relying on random coding, a graph-theoretic decomposition lemma is used to show that every type class can be partitioned into so-called ``balanced'' sets with a favorable packing property, and these balanced sets are taken as bins.
The same exponent was later derived by Ahlswede and Dueck in \cite{Ahlswede1982} by exploiting the connection to the counterpart channel coding problem and using permutation codes. 
In particular, their approach relies on covering each source type class using permutations of a good constant-composition channel code of the same type, that achieves the channel coding expurgated exponent. 

Both the Csisz{\'a}r-K{\"o}rner and Ahlswede-Dueck proofs of the expurgated exponent rely heavily on type analysis and combinatorial arguments, and do not use random coding or expurgation, at least not in a direct manner. Moreover, their derivations yield the primal expression in \eqref{eq:Eex_primal}, and currently it is not known whether this expression admits an equivalent dual form. 

The main objective for our present work is to find a dual-domain derivation for the expurgated exponent in source coding with side information, mirroring Gallager's original proof of the expurgated exponent in channel coding \cite{Gallager1965}. Gallager's expurgation approach does not directly apply here as in the source coding setting, one cannot simply remove a ``bad'' fraction of source sequences. A dual-domain derivation is also expected to yield non-asymptotic bounds that are valid for settings with arbitrary memory, countable source alphabets, and general side information alphabet. 
Furthermore, compared to the primal expression in \eqref{eq:Eex_primal}, a dual domain expression will likely have far fewer parameters and will hence be easier to compute and evaluate. 
\subsection{Mismatched Decoding}
Another motivation for our current work is to evaluate error exponents for source coding with side information under generic, and possibly mismatched, decoding metrics. Mismatched decoding, thoroughly studied in the context of channel coding, is the setting for which the decoder employs a fixed decoding metric, not necessarily related to the probability law describing the system (see \cite{scarlett2020information} and references therein). Mismatched decoding naturally arises in cases where the decoder cannot accurately estimate the system's parameters, or in cases where, for complexity reasons, the one prefers an alternative decoding metric.
To this end, in the considered setting we adopt a maximum metric decoder of the form
\begin{equation}
	\psi(m,\yb)=\argmax_{\xb\in\Xcn:\phi(\xb)=m}q(\xb,\yb),
	\label{eq:dec}
\end{equation}
where $q$ is an arbitrary non-negative decoding metric. 
Upon observing $m$ and $\yb$, the decoder chooses the source sequence $\xb$ whose metric is highest out of those encoded to $m$. 
The decoding metric is said to be memoryless if $q(\xb,\yb)=\prod_{i=1}^{n}q(x_i,y_i)$.
In this case, and under memoryless source and side information, the optimal MAP decoder is recovered by choosing the metric as $q(x,y) = P_{X|Y}(x|y)$. Otherwise, the decoder is said to be mismatched.
As we shall see further on, the majority of our derivations are valid for an arbitrary metric $q(\xb,\yb)$.
With that said, in our single-letter asymptotic results, we choose to focus on the memoryless case.

Under an arbitrary memoryless decoding metric $q$, primal-domain random coding and expurgated exponents that generalize \eqref{eq:Er_primal} and \eqref{eq:Eex_primal} can be distilled from the results of Csisz\'ar and K\"{o}rner \cite{Csi81}, obtained using their graph decomposition approach. These results are derived for a generic class of decoders called $\alpha$-decoders, which includes the memoryless mismatched decoder as a special case. The corresponding random coding and expurgated exponents are given by 
\begin{equation}
	E^{\mathrm{ck}}_{q,\mathrm{r}}(R)= \min_{P_{\hat{X}\tilde{X}\tilde{Y}}\in\Tc} D(P_{\tilde{X}\tilde{Y}}\|P_{XY})+|R-H(\hat{X}|\tilde{Y})|^{+},
	\label{eq:RC_CK_SW_exponent}
\end{equation}
and
\begin{equation}
	E^{\mathrm{ck}}_{q,\mathrm{ex}}(R)=\min_{\substack{P_{\hat{X}\tilde{X}\tilde{Y}}\in\mathcal{T}\\H(\hat{X}|\tilde{X})\geq R}}D(P_{\tilde{X}\tilde{Y}}\|P_{XY})+R-H(\hat{X}|\tilde{X},\tilde{Y})
	\label{eq:EX_CK_SW_exponent}
\end{equation}
respectively, where the set of distributions $\mathcal{T}$ is defined as 
\begin{equation}
\mathcal{T}=\Big\lbrace P_{\hat{X}\tilde{X}\tilde{Y}}\in\Pc(\Xc\times\Xc\times\Yc):P_{\hat{X}}=P_{\tilde{X}},\mathbb{E}\big[\log q(\hat{X},\tilde{Y})\big]\geq\mathbb{E}\big[\log q(\tilde{X},\tilde{Y})\big]\Big\rbrace.
\end{equation}
These exponents reduce to their matched counterparts in \eqref{eq:Er_primal} and \eqref{eq:Eex_primal} when the decoding metric $q(x,y)=P_{Y|X}(y|x)$ is chosen to be the ML decoder, which is optimal in this case due to the type-by-type coding scheme used in \cite{Csi81}. 

From \eqref{eq:RC_CK_SW_exponent} and \eqref{eq:EX_CK_SW_exponent}, we obtain the Csisz\'ar-K\"orner's achievable exponent given by
\begin{equation}
	E^{\mathrm{ck}}_q(R)=\max\left\{E^{\mathrm{ck}}_{q,\mathrm{r}}(R),E^{\mathrm{ck}}_{q,\mathrm{ex}}(R)\right\}.
	\label{eq:ck}
\end{equation}
Dual-domain derivations and expressions for \eqref{eq:RC_CK_SW_exponent} and \eqref{eq:EX_CK_SW_exponent}, and hence \eqref{eq:ck}, are currently not known. This is a gap in the literature that we aim to address in the present paper.

There is a body of work on variable-rate codes for source coding with side information, see e.g. \cite{Weinberger2015,Chen2017,Tamir2021}. In the variable-rate setting, the sequences of the types that are bounded away from source distribution $P_{X}$ (assuming that the distribution $P_{X}$ is known exactly) are losslessly encoded. This relaxation of the rate constraint does not affect the average rate asymptotically and improves the error exponent when the dominant error of the fixed-rate codes are caused by one of those types. The works \cite{Weinberger2015,Tamir2021} analyze the trade-off between the error-exponent and excess-rate exponent for the variable-rate setting. While the idea of using variable-rate codes for the source coding with side-information problem is well-investigated for the matched case, it does not immediately extend to the mismatched case. In this paper we focus on the setting of fixed-length (fixed-rate) codes and do not consider variable-rate codes.

\subsection{Contribution}

The main contributions of the paper are as follows:
\begin{itemize}
	\item In Section \ref{sec:main results}, we present our main theorems on dual-domain achievable exponents with mismatched decoding for two random coding ensembles, namely standard ensemble and type-by-type ensemble. These theorems show the existence of a code that attains the maximum of two exponents, namely random coding and expurgated exponents. The dual domain exponents for type-by-type ensemble and expurgated exponent for standard ensemble appear for the first time in the literature. We further show the equivalence of the dual domain exponents to the Csisz\'ar-K\"orner's achievable exponent in \eqref{eq:ck}. We also present relative comparisons of these achievable exponents and state the family of metrics that attain the corresponding optimal exponent. The achievable rates for these ensembles together with their relation to the generalized mutual information and LM rates of mismatched decoding in channel coding are also presented.
	
	\item In Section \ref{sec:exp}, we introduce an expurgation method for source coding that is valid for general source and side information models and arbitrary decoding metrics. 
	The expurgation method works for either of the two code ensembles and can be summarized as follows. We first use Gallager's expurgation technique, developed for channel coding \cite{Gallager1965}, to show that there exists a code in the ensemble such that at least half of the source sequences satisfy a desired error bound. We expurgate the ``bad'' half of source sequences, encode them separately into a new set of codewords, and apply expurgation again. The error bound derived in the previous iteration remains valid here, since we now have fewer source sequences. The procedure stops after (at most) $k = n \log_2 |\mathcal{X}|$ iterations, once all source sequences are exhausted. Combining the expurgated codes from all iterations, we obtain a code in which all the source sequences satisfy the desired error bound.

	\item Based on our expurgation method, in Sections \ref{sec:proof_th1} and \ref{sec:proof_th2}, we derive non-asymptotic bounds that are valid for any discrete source model with arbitrary side information alphabet and arbitrary decoding metric.\end{itemize}

\subsection{Notation}

We use bold symbols for vectors (e.g. $\xb$), and denote the corresponding $i$-th entry using a subscript (e.g. $x_i$). The set of all probability distributions on an alphabet $\Xc$, is denoted by $\Pc(\Xc)$, and the set of all empirical distributions on a vector in $\Xcn$ is denoted by $\Pc_n(\Xc)$. For a given type $\hat{P}\in\Pc_n(\Xc)$, the type
class $\Tc_n(\hat{P})$ is defined to be the set of all sequences in $\Xcn$ with type $\hat{P}$.
The probability of an event is denoted by $\mathbb{P}[\cdot]$. The marginals of a joint
distribution $P_{XY}(x,y)$ are denoted by $P_{X}(x)$ and $P_{Y}(y)$. We write $P_X=Q_X$ to denote element-wise equality between two probability distributions on the same alphabet. Expectation with respect to a joint distribution $P_{XY}(x,y)$ is denoted by $\mathbb{E}_{P}[\cdot]$, or simply $\mathbb{E}[\cdot]$ when the associated probability distribution is understood from the context. Given a distribution $Q_{X}$ and conditional distribution $W_{Y|X}$, we write $Q_{X}\times W_{Y|X}$ to denote the resulting joint distribution and the corresponding mutual information is written as $I(Q_{X},W_{Y|X})$, or simply $I(X;Y)$ when the distribution is understood from the context. We use standard notation for the entropy $H(X)$ (also sometimes shown as $H(Q)$ to emphasize the distribution of the random variable), conditional entropy $H(X|Y)$ and divergence $D(P\|Q)$. All logarithms have base $e$ and all rates are in units of nats including in the example. We denote the indicator function of an event by $\mathds{1}[\cdot]$.

\section{Main Results}
\label{sec:main results}

In this section, we introduce our main results, which are dual domain expressions of expurgated error exponents. These expressions are obtained directly in the dual domain by means of an expurgation method detailed in Section \ref{sec:exp} for two different code ensembles. These code ensembles differ on whether the set of source sequences and codewords are partitioned or not. The relative merits of these ensembles are illustrated by means of a numerical example. Proofs of the results are given in Sections \ref{sec:proof_th1}, \ref{sec:proof_th2} and \ref{sec:proof_primal_dual}, respectively.

\subsection{Standard Block Random Coding}

\begin{definition}
	The \emph{Standard Random Coding Ensemble} is the set of all ($n,R$) standard block codes for source alphabet $\Xc$ with a probability measure over the codes having the following property: each source sequence is assigned independently and with equal probability $\frac{1}{M}$ into one of the $M={\rm e}^{nR}$ codewords.
\end{definition}
For the standard random coding ensemble we have the following result concerning achievable error exponents for memoryless sources employing a memoryless decoding metric.

\begin{theorem}
\label{Thm:Exponent_St}
	For every $R>0$ and every distribution $P_{XY}\in\Pc(\Xc\times\Yc)$ there exists a standard block source code with maximum metric decoder \eqref{eq:dec} employing decoding metric $q(x,y)$ that achieves the exponent
	\begin{equation}
		E_{q}(R)=\max\{E_{q,\mathrm{r}}(R),E_{q,\mathrm{ex}}(R)\}
		\label{eqn:exp_St}
	\end{equation}
	
	where
	\begin{equation}
	E_{q,\mathrm{r}}(R) = \sup_{0\leq\rho\leq 1,s\geq 0}\rho R-\log\sum\limits_{x\in\Xc,y\in\Yc}P_{XY}(x,y)\left(\sum\limits_{\bar{x}\in\Xc}\left(\frac{q(\bar{x},y)}{q(x,y)}\right)^s\right)^{\rho},\label{eqn:exp_St_RC}
\end{equation}
and
\begin{equation}
	E_{q,\mathrm{ex}}(R) = \sup_{\rho\geq 1,s\geq 0}\rho R-\log\sum\limits_{x\in\Xc}\left(\sum\limits_{\bar{x}\in\Xc}\left(\sum\limits_{y\in\Yc}P_{XY}(x,y)\left(\frac{q(\bar{x},y)}{q(x,y)}\right)^s\right)^{\frac{1}{\rho}}\right)^{\rho}.\label{eqn:exp_St_EX}
\end{equation}
and the optimization is over real parameters $\rho$ and $s$.
	\label{th:standard}
\end{theorem}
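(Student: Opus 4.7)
The plan is to bound the ensemble-average error probability of the standard random binning ensemble in two different ways: one valid for $\rho\in[0,1]$ yielding $E_{q,\mathrm{r}}(R)$, and one valid for $\rho\geq 1$ yielding $E_{q,\mathrm{ex}}(R)$. Since in each case a code at least as good as the derived bound exists, the better of the two attains $E_q(R)=\max\{E_{q,\mathrm{r}}(R),E_{q,\mathrm{ex}}(R)\}$. Both derivations start from the pointwise union bound
\begin{equation*}
P_e(\xb,\Cc)\leq\sum_{\yb}P^n_{Y|X}(\yb|\xb)\,\min\!\Big(1,\sum_{\bar{\xb}\neq\xb}\indicator\{\phi(\bar{\xb})=\phi(\xb)\}\,\indicator\{q(\bar{\xb},\yb)\geq q(\xb,\yb)\}\Big),
\end{equation*}
combined with the metric tilt $\indicator\{q(\bar{\xb},\yb)\geq q(\xb,\yb)\}\leq(q(\bar{\xb},\yb)/q(\xb,\yb))^s$ for arbitrary $s\geq 0$.

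For $E_{q,\mathrm{r}}(R)$, I would apply Gallager's $\rho$-trick $\min(1,a)\leq a^{\rho}$ with $\rho\in[0,1]$, take $\EE_{\Cc}$ using the pairwise-independence property $\EE_{\Cc}[\indicator\{\phi(\bar{\xb})=\phi(\xb)\}]=1/M$, and pull the expectation inside the concave $(\cdot)^{\rho}$ via Jensen; averaging over $\xb\sim P^n_X$ then produces a letter-wise factorisation that yields $E_{q,\mathrm{r}}(R)$ after optimising $\rho$ and $s$. For $E_{q,\mathrm{ex}}(R)$, I would instead keep the sum over $\bar{\xb}$ without the $\min$, raise $P_e(\xb,\Cc)$ to the power $1/\rho$ with $\rho\geq 1$, and exploit the sub-additivity $(\sum_i a_i)^{1/\rho}\leq\sum_i a_i^{1/\rho}$ together with $\indicator^{1/\rho}=\indicator$ to obtain
\begin{equation*}
P_e(\xb,\Cc)^{1/\rho}\leq\sum_{\bar{\xb}\neq\xb}\indicator\{\phi(\bar{\xb})=\phi(\xb)\}\Big(\sum_{\yb}P^n_{Y|X}(\yb|\xb)(q(\bar{\xb},\yb)/q(\xb,\yb))^s\Big)^{1/\rho}.
\end{equation*}
Taking $\EE_{\Cc}$ yields a sequence-wise bound on $\EE_{\Cc}[P_e(\xb,\Cc)^{1/\rho}]$, and the expurgation method of Section~\ref{sec:exp} then produces a code $\Cc^\star$ satisfying $P_e(\xb,\Cc^\star)\leq c\,\EE_{\Cc}[P_e(\xb,\Cc)^{1/\rho}]^{\rho}$ for every $\xb$ with only a vanishing rate penalty. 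Averaging $P^n_X(\xb)P_e(\xb,\Cc^\star)$ and absorbing $P^n_X(\xb)^{1/\rho}$ into the innermost sum converts $P^n_{Y|X}$ into $P^n_{XY}$, and the memoryless product structure then factorises letter-wise into exactly the dual-domain form of $E_{q,\mathrm{ex}}(R)$.

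The main obstacle is the expurgation itself. In channel coding one simply deletes the worst codewords, but in source coding every $\xb$ must still be assigned to a bin; the expurgation must therefore re-map the offending source sequences to a small number of additional dedicated bins while keeping the rate increase $o(1)$, thereby promoting the ensemble-average bound on $\EE_{\Cc}[P_e(\xb,\Cc)^{1/\rho}]$ into a per-sequence guarantee. Once that tool (developed in Section~\ref{sec:exp}) is in place, the remaining steps---the metric tilt, the Gallager and Jensen manipulations, and the memoryless factorisation---are routine dual-domain calculations.
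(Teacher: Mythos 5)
Your proposal is correct and follows essentially the same route as the paper: the pointwise union bound with the metric tilt $\indicator\{q(\barxb,\yb)\geq q(\xb,\yb)\}\leq(q(\barxb,\yb)/q(\xb,\yb))^{s}$, Gallager's $\rho\in[0,1]$ trick plus Jensen and pairwise independence for $E_{q,\mathrm{r}}(R)$, and for $E_{q,\mathrm{ex}}(R)$ the $(\sum_i a_i)^{1/\rho}\leq\sum_i a_i^{1/\rho}$ step with $\rho\geq 1$ followed by the Section~\ref{sec:exp} expurgation lemma to turn the bound on $\mathbb{E}[p_e(\xb,\Csf)^{1/\rho}]$ into a per-sequence guarantee, absorbing $P_X^n$ into the inner sum and factorising letter-wise. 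The only, immaterial, difference is that the paper extracts both exponents from the single expurgated code (by specialising the expurgation bound to $\rho=1$ to show $p_e(\Cc_{\rm ex})\leq 2\,\mathbb{E}[p_e(\Csf)]$), whereas you take the better of two possibly different codes at each rate, which still suffices for the existence statement of the theorem.
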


\begin{proof}
The proof is structured in two parts. The first part consists of showing that there exists a code for which the error probability for every source sequence $\xb\in\Xc^n$ meets a certain upper bound (stated in Lemma \ref{Lem:expurgation_St}). This bound is derived in Section \ref{subsec:standard} by introducing a method for expurgation in source coding with standard ensemble. The second part of the proof is the analysis of the bound in Lemma \ref{Lem:expurgation_St}, which results in the exponent \eqref{eqn:exp_St}. This can be found in Section \ref{sec:proof_th1}.
\end{proof}

In deriving the error exponent of Theorem \ref{th:standard} we derive $n$-letter bounds in \eqref{eqn:bound_expurgated} and \eqref{eqn:random_coding_bound} that are valid for any discrete source with arbitrary side information alphabets and arbitrary decoding metrics without the memoryless assumption. These bounds can be used to derive error exponents for general source models.

\subsection{Type-by-Type Random Coding}

We now introduce a different block random coding ensemble that encodes each source type separately. As will be shown next, this attains potentially a higher error exponent than the standard block code ensemble under the same decoding metric. Consider partitioning the codeword set $\Mc=\{1,\cdots,M\}$ into $|\Pc_n(\Xc)|$ subsets as: $\Mc=\bigcup\limits_{i=1}^{|\Pc_n(\Xc)|}\Mc_i$ where $\Mc_i$ is the codeword set for source type $\hat{P}_i$, $i\in\{1,\cdots,|\Pc_n(\Xc)|\}$, $|\Mc_i|=\frac{M}{|\Pc_n(\Xc)|}$ and $\Mc_i\cap\Mc_j=\emptyset$ for $i\neq j$. 

A type-by-type block source code $\Ccb=\{\Cc_1,\cdots,\Cc_{|\Pc_n(\Xc)|}\}$ is the union of $|\Pc_n(\Xc)|$ codes where each $\Cc_i$ is an $(n,R_i)$ block code for source type $\hat{P}_i$ with mapping function $\phi_{i}:\Tc_n(\hat{P}_i)\to\Mc_i$ and rate $R_i=\frac{\log M}{n}-\delta_n$ for $i\in\{1,\cdots,|\Pc_n(\Xc)|\}$, where $\delta_n=\frac{\log|\Pc_n(\Xc)|}{n}$. In other words, code $\Cc_i$ is a code for the source sequences in source type class $\Tc_n(\hat{P}_i)$ with codeword set $\Mc_i$ . 
Observe that by construction, every code $\Cc_i$ has the same rate, and that an error can only occur between source sequences of the same type. We also note that the effect of partitioning the codeword set on the coding rate vanishes asymptotically since $\delta_n\to 0$ as $n\to\infty$.

\begin{definition}\label{def:type_based_RC}
	The \emph{Type-by-Type Random Coding Ensemble} is the set of all ($n,R$) block codes for the source alphabet $\Xc$ with a probability measure over the codes having the following property: For every source type $\hat{P}_i$, $i\in\{1,\cdots,|\Pc_n(\Xc)|\}$, each source sequence $\xb\in\Tc_n(\hat{P}_i)$ is independently assigned with equal probability $\frac{1}{|\Mc_i|}$ to each of the codewords in $\Mc_i$.
\end{definition}

The decoder for a type-by-type code $\Ccb$ is a set of mappings $\psi_{i}:\Mc_i\times\Ycn\to\Tc_n(\hat{P}_i)$ for every $i\in\{1,\cdots,|\Pc_n(\Xc)|\}$. Similarly to the standard ensemble, we consider using a maximum metric decoder as follows. For every $m\in\Mc_i$
\begin{equation}
	\psi_{i}(m,\yb)=\argmax_{\xb\in\Tc_n(\hat{P}_i):\phi_i(\xb)=m}q(\xb,\yb),\label{eq:dec_tbt}
\end{equation}
where $q(\xb,\yb)$ is an arbitrary non-negative decoding metric. For the type-by-type random coding ensemble we have the following result concerning the achievable error exponents for memoryless sources employing a memoryless decoding metric.

\begin{theorem}[Type-by-Type Random Coding]
\label{th:tbt}
For every $R>0$ and every distribution $P_{XY}\in\Pc(\Xc\times\Yc)$ there exists a type-by-type block source code with maximum metric decoder \eqref{eq:dec_tbt} employing decoding metric $q(x,y)$ that achieves the exponent
	\begin{equation}
		E_{q}^{\rm tt}(R)=\max\{E^{\rm tt}_{q,\mathrm{r}}(R),E^{\rm tt}_{q,\mathrm{ex}}(R)\}\label{eq:Exponent_Type}
	\end{equation}
	\label{Thm:Exponent_Type}
where
\begin{equation}
	E^{\rm tt}_{q,\mathrm{r}}(R) = \sup_{0\leq\rho\leq 1,s\geq 0,a(\cdot)}\rho R-\log\sum\limits_{x\in\Xc,y\in\Yc}P_{XY}(x,y)\left(\sum\limits_{\bar{x}\in\Xc}\frac{e^{a(\bar{x})}}{e^{a(x)}}\left(\frac{q(\bar{x},y)}{q(x,y)}\right)^s\right)^{\rho},\label{eqn:exp_Type_RC}
\end{equation}
\begin{equation}
	E^{\rm tt}_{q,\mathrm{ex}}(R) = \sup_{\rho\geq 1,s\geq 0,a(\cdot)}\rho R-\log\sum\limits_{x\in\Xc}\left(\sum\limits_{\bar{x}\in\Xc}\left(\sum\limits_{y\in\Yc}P_{XY}(x,y)\frac{e^{a(\bar{x})}}{e^{a(x)}}\left(\frac{q(\bar{x},y)}{q(x,y)}\right)^s\right)^{\frac{1}{\rho}}\right)^{\rho}\label{eqn:exp_Type_EX}
\end{equation}
and the optimization is over real parameters $\rho,s$ and real-valued functions $a:\Xc\to\RR$.

\end{theorem}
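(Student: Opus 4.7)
The plan is to follow the same two-part structure as the proof of Theorem~\ref{th:standard}, emphasizing where the type-by-type ensemble forces modifications. First, I would derive a non-asymptotic expurgated bound for the ensemble of Definition~\ref{def:type_based_RC}, stating that there exists a type-by-type code $\Ccb$ such that, for every $\xb\in\Xcn$ of type $\hat P_i$, the error probability $p_e(\xb)$ is upper-bounded by a sum over $\bar\xb\in\Tc_n(\hat P_i)\setminus\{\xb\}$ of Bhattacharyya-type pairwise terms weighted by an inverse power of $|\Mc_i|$, with free parameters $\rho\geq 0$ and $s\geq 0$. This is the direct type-by-type counterpart of Lemma~\ref{Lem:expurgation_St} and is obtained by applying the expurgation method of Section~\ref{sec:exp} within each type class independently and then collecting the resulting per-type codes into a single type-by-type code.

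The second step is the one point where the type-by-type proof departs qualitatively from the standard one. Because errors can only occur between sequences of the same type, the inner sum in the bound above runs over $\bar\xb\in\Tc_n(\hat P_i)$, which is hard to single-letterize directly. To remove this obstruction, I would introduce an arbitrary function $a:\Xc\to\RR$ and exploit the identity $\sum_{j=1}^n a(x_j) = \sum_{j=1}^n a(\bar x_j)$, valid for every $\bar\xb\in\Tc_n(\hat P_i)$. Multiplying each summand by $\exp\bigl(\sum_{j=1}^n a(\bar x_j) - \sum_{j=1}^n a(x_j)\bigr) = 1$ leaves the bound unchanged, but after this insertion the sum can be safely enlarged from $\Tc_n(\hat P_i)$ to all of $\Xcn$ without violating the inequality; the ratio $e^{a(\bar x_j)}/e^{a(x_j)}$ now appears in each summand, and $a(\cdot)$ survives as a free optimization parameter in the final dual exponent. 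This is precisely the mechanism by which $a(\cdot)$ enters \eqref{eqn:exp_Type_RC}--\eqref{eqn:exp_Type_EX} but not in the standard-ensemble exponents of Theorem~\ref{th:standard}.

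Once the inner sum is taken over all of $\Xcn$, single-letterization proceeds exactly as in the proof of Theorem~\ref{th:standard}: the memoryless source, the memoryless metric $q$ and the product form of $e^{a(\cdot)}$ cause the $n$-letter expression to factorize into an $n$-th power, and the standard Gallager manoeuvres --- the bound $(\sum_i z_i)^{\rho}\leq \sum_i z_i^{\rho}$ for $\rho\in[0,1]$ and Jensen's inequality for $\rho\geq 1$ --- yield the two dual-domain single-letter expressions \eqref{eqn:exp_Type_RC} and \eqref{eqn:exp_Type_EX}, with the parameter $s\geq 0$ arising from the usual parametric upper bound on the error indicator. Taking the better of the two regimes gives the claimed $E_q^{\rm tt}(R)=\max\{E^{\rm tt}_{q,\mathrm{r}}(R),E^{\rm tt}_{q,\mathrm{ex}}(R)\}$.

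The main obstacle, I expect, will be carrying out the expurgation uniformly across the polynomially-many type classes in step one: expurgation inside a single class yields a per-type code with a controlled maximum per-sequence error probability, and combining these per-type codes into a single type-by-type code with only sub-exponential overall loss requires careful bookkeeping using precisely the rate offset $\delta_n = \frac{1}{n}\log|\Pc_n(\Xc)|\to 0$ already built into Definition~\ref{def:type_based_RC}. The $a(\cdot)$ insertion is algebraically clean once the type-restricted bound is in hand, and the remaining single-letterization is essentially a re-run of the dual-domain calculations used for Theorem~\ref{th:standard}.
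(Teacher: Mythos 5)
Your proposal follows essentially the same route as the paper: a per-type expurgation lemma (the counterpart of Lemma~\ref{Lem:expurgation_St}), then enlarging the type-restricted inner sum to all of $\Xcn$ by inserting the cost-function ratio $e^{a(\bar\xb)}/e^{a(\xb)}$, which equals one for same-type pairs and is exactly how $a(\cdot)$ enters \eqref{eqn:exp_Type_RC}--\eqref{eqn:exp_Type_EX}, followed by the standard single-letterization with the $|\Pc_n(\Xc)|$ and $k_i$ factors absorbed into a vanishing rate offset. The only slight imprecision is in attributing the elementary inequalities to the two regimes (the paper uses $(\sum_i a_i)^{1/\rho}\leq\sum_i a_i^{1/\rho}$ for $\rho\geq 1$ in the expurgated part and the indicator-union bound plus Jensen for $\rho\in[0,1]$ in the random-coding part), which does not affect the validity of the argument.
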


\begin{proof}
	The proof is structured in two parts. The first part consists of showing that there exists a code for which the error probability for every source sequence $\xb\in\Xc^n$ meets a certain upper bound (stated in Lemma \ref{Lem:expurgation_Type}). This bound is derived in Section \ref{subsec:type-by-type} by extending the introduced expurgation method to type-by-type ensemble. The second part of the proof is the analysis of the bound in Lemma \ref{Lem:expurgation_Type}, which results in the exponent \eqref{eq:Exponent_Type}. This can be found in Section \ref{sec:proof_th2}.
\end{proof}

In deriving the error exponent of Theorem \ref{th:tbt} we derive $n$-letter bounds in \eqref{eqn:bound_expurgated_Type} and \eqref{eqn:random_coding_bound_Type} that are valid for any discrete source with arbitrary side information alphabets and arbitrary decoding metrics without the memoryless assumption. These bounds can be used to derive error exponents for general source models.

The following result shows that the error exponent introduced in Theorem \ref{th:tbt} coincides with Csisz\'ar and K\"orner's \cite{Csi81} for the case of using memoryless metric.

\begin{proposition}[Primal-dual equivalence]
	The dual-domain error exponent derived in Theorem \ref{Thm:Exponent_Type} coincides with the Csisz\'ar-K\"orner exponent \eqref{eq:ck} derived in the primal domain via graph decomposition, i.e.,
	\begin{equation}
		E^{\rm tt}_{q}(R) = E^{\mathrm{ck}}_q(R).
	\end{equation}
	\label{pr:equaiv}
\end{proposition}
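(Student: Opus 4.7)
The plan is to establish the equivalence component-wise, showing that $E^{\rm tt}_{q,\mathrm{r}}(R) = E^{\mathrm{ck}}_{q,\mathrm{r}}(R)$ and $E^{\rm tt}_{q,\mathrm{ex}}(R) = E^{\mathrm{ck}}_{q,\mathrm{ex}}(R)$; equality of the maxima in \eqref{eq:Exponent_Type} and \eqref{eq:ck} follows immediately. Each component equality is a Lagrangian duality statement between the constrained primal minimization over $P_{\hat{X}\tilde{X}\tilde{Y}}\in\mathcal{T}$ and the unconstrained dual supremum over $(\rho,s,a(\cdot))$, with the identifications $s\geq 0 \leftrightarrow$ multiplier on the metric constraint $\mathbb{E}[\log q(\hat{X},\tilde{Y})]\geq\mathbb{E}[\log q(\tilde{X},\tilde{Y})]$, $a(\cdot):\Xc\to\RR \leftrightarrow$ multipliers on the marginal equality $P_{\hat{X}}=P_{\tilde{X}}$ (one scalar per symbol), and $\rho \leftrightarrow$ multiplier on the rate constraint.

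For the weak-duality direction of the random-coding term, the approach is to start from the dual objective in \eqref{eqn:exp_Type_RC} and apply the Donsker–Varadhan inequality $-\log\sum_{x,y} P_{XY}(x,y) g(x,y)\leq D(P_{\tilde{X}\tilde{Y}}\|P_{XY}) - \mathbb{E}_{P_{\tilde{X}\tilde{Y}}}[\log g]$ to the outer sum, followed by the Gibbs bound $\log\sum_{\bar x} e^{f(\bar x)}\geq H(P)+\mathbb{E}_P[f]$ to the inner log-sum, tested against an arbitrary conditional $P_{\hat{X}|\tilde{Y}}$. On the feasible set $\mathcal{T}$ the $a$-term cancels by the marginal constraint and the $s$-term is non-negative, giving $L(\rho,s,a)\leq D(P_{\tilde{X}\tilde{Y}}\|P_{XY})+\rho(R-H(\hat{X}|\tilde{Y}))$; since $\rho\in[0,1]$ this is at most $D(P_{\tilde{X}\tilde{Y}}\|P_{XY})+|R-H(\hat{X}|\tilde{Y})|^{+}$, recovering \eqref{eq:RC_CK_SW_exponent}.

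For the expurgated term the same scheme is applied, now testing against $P_{\hat{X}|\tilde{X}\tilde{Y}}$, which replaces $H(\hat{X}|\tilde{Y})$ by the finer $H(\hat{X}|\tilde{X},\tilde{Y})$, and with a nested second application of Gibbs to handle the $(\cdot)^{1/\rho}$ inside the outer $(\cdot)^{\rho}$ in \eqref{eqn:exp_Type_EX}. Reassembling terms yields $L(\rho,s,a)\leq D(P_{\tilde{X}\tilde{Y}}\|P_{XY})+\rho(R-H(\hat{X}|\tilde{X}))+I(\hat{X};\tilde{Y}|\tilde{X})$. Under the constraint $H(\hat{X}|\tilde{X})\geq R$, the factor $R-H(\hat{X}|\tilde{X})$ is non-positive, so $\rho\geq 1$ gives $\rho(R-H(\hat{X}|\tilde{X}))\leq R-H(\hat{X}|\tilde{X})$, whence $L\leq D(P_{\tilde{X}\tilde{Y}}\|P_{XY})+R-H(\hat{X}|\tilde{X})+I(\hat{X};\tilde{Y}|\tilde{X}) = D(P_{\tilde{X}\tilde{Y}}\|P_{XY})+R-H(\hat{X}|\tilde{X},\tilde{Y})$, matching \eqref{eq:EX_CK_SW_exponent}.

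The hard part is the reverse strong-duality inequality $E^{\rm tt}_{q}(R)\geq E^{\mathrm{ck}}_q(R)$. The dual objective is concave in $\rho$ for fixed $(s,a)$ and the primal feasible set is convex and compact, but the joint Lagrangian is not jointly convex-concave in all variables, so Sion's minimax theorem does not apply off-the-shelf. I would resolve this in one of two ways: either apply Sion's theorem to a partial subproblem — exchanging sup and inf in $(s,a)$ against the primal for each fixed $\rho$, then optimizing $\rho$ at the end — or, at the optimal primal distribution $P^{\star}_{\hat{X}\tilde{X}\tilde{Y}}$, read off the optimal dual variables $(s^{\star},a^{\star}(\cdot),\rho^{\star})$ from the KKT stationarity conditions and verify by direct substitution that the bounds above become equalities (i.e.\ the Gibbs inequalities are tight under the Boltzmann choice of conditionals induced by $s^{\star}$ and $a^{\star}$). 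The expurgated case is the most delicate because of the additional $1/\rho$-th-power nesting, and I expect the bulk of the technical effort to lie in this verification.
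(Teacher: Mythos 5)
Your weak-duality direction is sound in outline: testing the Donsker--Varadhan and Gibbs variational inequalities against the conditionals induced by a feasible $P_{\hat{X}\tilde{X}\tilde{Y}}\in\Tc$, using $P_{\hat{X}}=P_{\tilde{X}}$ to cancel the $a(\cdot)$ terms and $s\geq 0$ together with the metric constraint to drop the $s$-term, does give $E^{\rm tt}_{q,\mathrm{r}}(R)\leq E^{\mathrm{ck}}_{q,\mathrm{r}}(R)$ and (with the identity $H(\hat{X}|\tilde{X})-I(\hat{X};\tilde{Y}|\tilde{X})=H(\hat{X}|\tilde{X},\tilde{Y})$ and $\rho\geq 1$) $E^{\rm tt}_{q,\mathrm{ex}}(R)\leq E^{\mathrm{ck}}_{q,\mathrm{ex}}(R)$. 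The genuine gap is the reverse inequality, which you yourself flag as the hard part and then only name two candidate strategies for, without carrying either out. Neither is routine: Sion's theorem does not apply to the full problem (as you note, the Lagrangian is not jointly convex-concave, and even after fixing $\rho$ the reduction of the inner optimization to the nested $(\cdot)^{1/\rho}$-inside-$(\cdot)^{\rho}$ structure of \eqref{eqn:exp_Type_EX} is itself a substantial derivation), and the KKT-verification route amounts to re-deriving the dual form of a constant-composition expurgated exponent with two marginal constraints plus a rate constraint --- precisely the technical core of the result. As it stands, the proposal proves only one of the two inequalities in Proposition \ref{pr:equaiv}.

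For comparison, the paper avoids a from-scratch strong-duality argument by decomposing the Csisz\'ar--K\"orner exponents over the composition $Q=P_{\tilde{X}}$, writing $E^{\mathrm{ck}}_{q,\cdot}(R)=\min_{Q}\big[D(Q\|P_X)+E^{\mathrm{cc}}_{q,\cdot}(H(Q)-R,Q,P_{Y|X})\big]$ via the identity \eqref{eqn:identity} and \cite[Eq.~(32)]{Csi81}, and then importing the already-established dual forms of the constant-composition random-coding and expurgated channel exponents from \cite{scarlett2014_1} and \cite[Appendix B]{scarlett2014}. The only interchange it then needs is a Fan minimax swap of $\min_Q$ with $\sup_{\rho,s,a(\cdot)}$, justified by concavity of the objective in $Q$, after which the inner minimization over $Q$ is solved in closed form by a tilted distribution (\eqref{eqn:minimizing_distribution} and \eqref{eqn:minimizing_distribution_rc}), yielding exactly \eqref{eqn:exp_Type_EX} and \eqref{eqn:exp_Type_RC}. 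If you want to complete your argument without leaning on those channel-coding dual forms, you would have to execute your KKT/tightness verification in full for the nested expurgated objective, which is where essentially all of the work lies.
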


\begin{proof}
	The proof has two parts. In the first part the equivalence of $E^{\mathrm{ck}}_{q,\mathrm{r}}(R)$ in \eqref{eq:RC_CK_SW_exponent} to $E^{\rm tt}_{q,\mathrm{r}}(R)$ in \eqref{eqn:exp_Type_RC} is shown. In the second part the equivalence of $E^{\mathrm{ck}}_{q,\mathrm{ex}}(R)$ in \eqref{eq:EX_CK_SW_exponent} to $E^{\rm tt}_{q,\mathrm{ex}}(R)$ in \eqref{eqn:exp_Type_EX} is shown. The proof can be found in Section \ref{sec:proof_primal_dual}.
\end{proof}

In proving the equivalence of the primal and dual forms of the type-by-type exponent we show the following relations between type-by-type exponents and the exponent of constant composition codes for a corresponding channel as follows,
\begin{align}
	E^{\mathrm{tt}}_{q,\rm r}(R)&=D(Q\|P_{X})+ E_{q,\mathrm{r}}^{\mathrm{cc}}\left(H(Q)-R,Q,P_{Y|X}\right),\\
	E^{\mathrm{tt}}_{q,\rm ex}(R)&=D(\tilde{Q}\|P_{X})+ E_{q,\mathrm{ex}}^{\mathrm{cc}}\left(H(\tilde{Q})-R,\tilde{Q},P_{Y|X}\right),
\end{align}
where $E_{q,\mathrm{r}}^{\mathrm{cc}}\left(\cdot,\cdot,\cdot\right)$ and $E_{q,\mathrm{ex}}^{\mathrm{cc}}\left(\cdot,\cdot,\cdot\right)$ are defined in \eqref{eqn:Channel_CC_rc_exponent} and \eqref{eqn:Channel_CC_expurgated_exponent}, respectively, and
\begin{align}
	Q(x)&=CP_{X}(x)\sum\limits_{y\in\Yc}P_{Y|X}(y|x)\left(\sum\limits_{\bar{x}\in\Xc}\frac{e^{a^*(\bar{x})}}{e^{a^*(x)}}\left(\frac{q(\bar{x},y)}{q(x,y)}\right)^{s^*}\right)^{\rho^*},\label{eq:Q}\\
	\tilde{Q}(x)&=\tilde{C}P_{X}(x)\left(\sum\limits_{\bar{x}\in\Xc}\left(\sum\limits_{y\in\Yc}P_{Y|X}(y|x)\frac{e^{a^*(\bar{x})}}{e^{a^*(x)}}\left(\frac{q(\bar{x},y)}{q(x,y)}\right)^{s^*}\right)^{\frac{1}{\rho^*}}\right)^{\rho^*}\label{eq:tildeQ},
\end{align}
and $C$ and $\tilde{C}$ are the normalization constants and $s^*, \rho^*$ and $a^*(\cdot)$ in \eqref{eq:Q} and \eqref{eq:tildeQ} are the optimizing choices, respectively in \eqref{eqn:exp_Type_RC} and \eqref{eqn:exp_Type_EX} for rate $R$. 

The following results are relative comparisons among the exponents introduced in the previous theorems, stating the families of metrics that attain the optimal exponent.
\begin{corollary}
	For every mismatched decoder employing decoding metric $q(x,y)$, we have that
	\begin{equation}
		E_{q,\mathrm{r}}(R) \leq E^{\rm tt}_{q,\mathrm{r}}(R) \leq E_{\mathrm{r}}(R),\label{eqn:RC_exponent_inequalities}
	\end{equation}
	where $E_{\mathrm{r}}(R)$ is defined in \eqref{eq:Er_dual} and the equality $E^{\rm tt}_{q,\mathrm{r}}(R)=E_{\mathrm{r}}(R)$ holds for any
	\begin{equation}
		q(x,y)=e^{b(x)+c(y)}P_{X|Y}(x|y)^{\tau}\label{eqn:RC_Type_optimal_decoders}
	\end{equation}
	with arbitrary $b(x)$, $c(y)$ and $\tau>0$. Furthermore, the equality $E_{q,\mathrm{r}}(R)=E_{\mathrm{r}}(R)$ holds for any
	\begin{equation}
		q(x,y)=e^{c(y)}P_{X|Y}(x|y)^{\tau}\label{eqn:RC_St_optimal_decoders}
	\end{equation}
	with arbitrary $c(y)$ and $\tau>0$.
		
	\label{cor:RC_exponent_comparison}
\end{corollary}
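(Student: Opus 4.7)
The first inequality in \eqref{eqn:RC_exponent_inequalities} is immediate by inspection of \eqref{eqn:exp_St_RC} and \eqref{eqn:exp_Type_RC}: the type-by-type expression supremizes over the additional free function $a(\cdot)$, and the choice $a(\cdot)\equiv 0$ recovers the standard expression verbatim, so a supremum over a larger parameter set can only be larger.

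For the nontrivial upper bound $E^{\rm tt}_{q,\mathrm{r}}(R)\leq E_{\mathrm{r}}(R)$, the cleanest route is through the primal--dual equivalence of Proposition \ref{pr:equaiv}, which in particular yields $E^{\rm tt}_{q,\mathrm{r}}(R)=E^{\mathrm{ck}}_{q,\mathrm{r}}(R)$. I would then restrict the primal minimization in \eqref{eq:RC_CK_SW_exponent} to joint distributions with $\hat{X}=\tilde{X}$ almost surely. Any such $P_{\hat{X}\tilde{X}\tilde{Y}}$ lies in $\mathcal{T}$ regardless of the decoding metric $q$: the marginal constraint $P_{\hat{X}}=P_{\tilde{X}}$ holds trivially, and $\EE[\log q(\hat{X},\tilde{Y})]\geq \EE[\log q(\tilde{X},\tilde{Y})]$ holds with equality. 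The objective then collapses to $D(P_{\tilde{X}\tilde{Y}}\|P_{XY})+|R-H(\tilde{X}|\tilde{Y})|^{+}$, whose minimum over $P_{\tilde{X}\tilde{Y}}$ is exactly the primal form \eqref{eq:Er_primal} of $E_{\mathrm{r}}(R)$; since restricting the feasible set can only increase the minimum, the bound follows. A self-contained dual-domain alternative is available via H\"older's inequality: with $f(x,y)=e^{a(x)}q(x,y)^{s}$ and conjugate exponents $1+\rho$ and $(1+\rho)/\rho$, one obtains $\bigl(\sum_{x}P_{X|Y}(x|y)^{1/(1+\rho)}\bigr)^{1+\rho}\leq \bigl(\sum_{\bar{x}}f(\bar{x},y)\bigr)^{\rho}\sum_{x}P_{X|Y}(x|y)f(x,y)^{-\rho}$, which after averaging over $y$ with weight $P_{Y}(y)$ lower-bounds the logarithmic term in \eqref{eqn:exp_Type_RC} by $E_{0}(\rho)$ uniformly in $a,s,q$.

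For the equality conditions, I would substitute the claimed metrics into the dual expressions and match parameters. Under $q(x,y)=e^{b(x)+c(y)}P_{X|Y}(x|y)^{\tau}$, the ratio $q(\bar{x},y)/q(x,y)$ simplifies to $(e^{b(\bar{x})}/e^{b(x)})\bigl(P_{X|Y}(\bar{x}|y)/P_{X|Y}(x|y)\bigr)^{\tau}$ since the $c(y)$-factors cancel. In \eqref{eqn:exp_Type_RC} the choice $a(x)=-sb(x)$ absorbs the $b$-dependence, and setting $s\tau=1/(1+\rho)$ collapses the inner sum to $\sum_{\bar{x}}\bigl(P_{X|Y}(\bar{x}|y)/P_{X|Y}(x|y)\bigr)^{1/(1+\rho)}$; a short manipulation then recovers $E_{0}(\rho)$ inside the logarithm, so the supremum over $\rho\in[0,1]$ attains $E_{\mathrm{r}}(R)$ exactly. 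Combined with the upper bound, this yields equality. For the standard ensemble, the absence of the free function $a(\cdot)$ forces $b(x)\equiv 0$ in order to eliminate the $b$-terms, singling out the more restrictive family \eqref{eqn:RC_St_optimal_decoders}; the same choice $s\tau=1/(1+\rho)$ then gives $E_{q,\mathrm{r}}(R)=E_{\mathrm{r}}(R)$. The main obstacle is the upper bound, which must hold uniformly over all mismatched metrics; once that is in hand, the rest of the corollary is essentially parameter matching.
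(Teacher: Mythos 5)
Your proposal is correct, but it is organized differently from the paper's proof, so a comparison is in order. The first inequality (constant $a(\cdot)$) is exactly the paper's step. For the upper bound $E^{\rm tt}_{q,\mathrm{r}}(R)\leq E_{\mathrm{r}}(R)$, the paper works entirely in the dual domain: it rewrites the logarithm's argument in \eqref{eqn:exp_Type_RC} as \eqref{eqn:Type_RC_log_argument} and applies H\"older's inequality \eqref{eqn:Holder inequality}, which is precisely your ``self-contained dual-domain alternative'' with $f(x,y)=e^{a(x)}q(x,y)^{s}$; that part of your argument coincides with the paper. Your primary route, however, is different: you invoke the primal--dual equivalence of Proposition \ref{pr:equaiv} and then restrict the minimization in \eqref{eq:RC_CK_SW_exponent} to distributions with $\hat{X}=\tilde{X}$ almost surely, which lie in $\mathcal{T}$ for every metric and collapse the objective to the primal form \eqref{eq:Er_primal} of $E_{\mathrm{r}}(R)$. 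This is valid and not circular (Proposition \ref{pr:equaiv} does not use the corollary), and it is arguably more transparent, but it buys the bound at the cost of importing the full primal--dual machinery plus the known primal expression of $E_{\mathrm{r}}(R)$, whereas the paper's H\"older argument is elementary and stays within the dual expressions being compared. For the equality claims, the paper reads the optimal metric families off the necessary and sufficient condition for equality in H\"older ($a_i=cb_i^{1/\rho}$ for all $i$ and all $y$), which also yields the ``only if'' remark it makes for the standard ensemble; you instead verify sufficiency by direct substitution, choosing $a(x)=-sb(x)$ and $s\tau=\tfrac{1}{1+\rho}$ so that the logarithm reduces to $E_{0}(\rho)$, which (combined with the upper bound) is all the corollary as stated requires for \eqref{eqn:RC_Type_optimal_decoders} and \eqref{eqn:RC_St_optimal_decoders}. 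Your parenthetical that the standard ensemble ``forces $b(x)\equiv 0$'' is only a heuristic for necessity and is not established by your argument, but nothing in the stated corollary depends on it.
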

	
\begin{proof}
	The first inequality in \eqref{eqn:RC_exponent_inequalities} follows by setting $a(x)$ equal to a constant for all $x$ in \eqref{eqn:exp_Type_RC}. To show the second inequality we write the summation inside the logarithm in \eqref{eqn:exp_Type_RC} as
	\begin{equation}
		\sum\limits_{y\in\Yc}P_{Y}(y)\bigg(\sum\limits_{x\in\Xc}P_{X|Y}(x|y)e^{-\rho a(x)}q(x,y)^{-s\rho}\bigg)\bigg(\sum\limits_{\bar{x}\in\Xc}e^{a(\bar{x})}q(\bar{x},y)^s\bigg)^{\rho}.\label{eqn:Type_RC_log_argument}
	\end{equation}
	We now apply H\"older's inequality:
	\begin{equation}
		\bigg(\sum\limits_{i}a_i^{\frac{1}{1+\rho}}b_i^{\frac{1}{1+\rho}}\bigg)^{1+\rho}\leq\bigg(\sum\limits_{i}a_i\bigg)\bigg(\sum\limits_{i}b_i^{\frac{1}{\rho}}\bigg)^{\rho}.\label{eqn:Holder inequality}
	\end{equation}
	Identifying $a_i=P_{X|Y}(x|y)e^{-\rho a(x)}q(x,y)^{-s\rho}$ and $b_i=(e^{a(x)}q(x,y)^s)^{\rho}$, we conclude that \eqref{eqn:Type_RC_log_argument} is lower bounded by
	\begin{equation}
		\sum\limits_{y\in\Yc}P_{Y}(y)\bigg(\sum\limits_{x\in\Xc}P_{X|Y}(x|y)^{\frac{1}{1+\rho}}\bigg)^{1+\rho}.
	\end{equation}
	The necessary and sufficient condition for equality in \eqref{eqn:Holder inequality} is that $a_i=c b_i^{\frac{1}{\rho}}$ for all $i$ and some positive constant $c$. Applying this to \eqref{eqn:Type_RC_log_argument} for all $y$ in outer summation, we show that $E^{\rm tt}_{q,\mathrm{r}}(R)=E_{\mathrm{r}}(R)$ holds for any decoding metric of the form given in \eqref{eqn:RC_Type_optimal_decoders}. Similarly, rewriting the summation inside logarithm in \eqref{eqn:exp_St_RC} and applying H\"older's inequality we can show that $E_{q,\mathrm{r}}(R)=E_{\mathrm{r}}(R)$ holds if and only if the decoding metric is of the form given in \eqref{eqn:RC_St_optimal_decoders}.
\end{proof}
	
Corollary \ref{cor:RC_exponent_comparison} shows that both type-by-type and standard ensembles recover $E_{\mathrm{r}}(R)$ with a family of metrics given by \eqref{eqn:RC_Type_optimal_decoders} and \eqref{eqn:RC_St_optimal_decoders}, where the latter is a subset of the former and both include the MAP decoding as a special case.
	
In order to compare the mismatched error exponents introduced in previous theorems with their matched counterparts, we define the following matched expurgated exponents
\begin{equation}
	E_{\mathrm{ex}}(R) = \sup_{\rho\geq 1,s\geq 0}\rho R-\log\sum\limits_{x\in\Xc}\left(\sum\limits_{\bar{x}\in\Xc}\left(\sum\limits_{y\in\Yc}P_{XY}(x,y)\left(\frac{P_{X|Y}(\bar{x}|y)}{P_{X|Y}(x|y)}\right)^s\right)^{\frac{1}{\rho}}\right)^{\rho}\label{eqn:Ex_st_MAP},
\end{equation}
\begin{equation}
	E_{\mathrm{ex}}^{\rm tt}(R) = \sup_{\rho\geq 1,a(\cdot)}\rho R-\log\sum\limits_{x\in\Xc}P_X(x)\left(\sum\limits_{\bar{x}\in\Xc}\left(\sum\limits_{y\in\Yc}\frac{e^{a(\bar{x})}}{e^{a(x)}}\sqrt{P_{Y|X}(y|x)P_{Y|X}(y|\bar{x})}\right)^{\frac{1}{\rho}}\right)^{\rho}.\label{eqn:Ex_type_MAP}
\end{equation}

Similarly to Corollary \ref{cor:RC_exponent_comparison}, we have the following.
	\begin{corollary}
	For every mismatched decoder employing decoding metric $q(x,y)$, we have that
	\begin{align}
		E_{q,\mathrm{ex}}(R) &\leq E^{\rm tt}_{q,\mathrm{ex}}(R) \leq E^{\rm tt}_{\mathrm{ex}}(R),\label{eqn:EX_exponent_inequalities1}\\
		E_{\mathrm{ex}}(R) &\leq E^{\rm tt}_{\mathrm{ex}}(R),\label{eqn:EX_exponent_inequalities2}
	\end{align}
	and the equality $E^{\rm tt}_{q,\mathrm{ex}}(R)=E^{\rm tt}_{\mathrm{ex}}(R)$ for all $0\leq R\leq \log|\Xc|$ holds for any
	\begin{equation}
		q(x,y)=e^{b(x)+c(y)}P_{X|Y}(x|y)^{\tau}\label{eqn:Ex_Type_optimal_decoders}
	\end{equation}
	with arbitrary $b(x)$, $c(y)$ and $\tau>0$, furthermore, the equality $E_{q,\mathrm{ex}}(R)=E^{\rm tt}_{\mathrm{ex}}(R)$ holds for a given $0\leq R\leq \log|\Xc|$ for any
	\begin{equation}
		q(x,y)=e^{c(y)}\left(e^{a(x)}\sqrt{P_{Y|X}(y|x)}\right)^{\tau}\label{eqn:Ex_St_optimal_decoders}
	\end{equation}
	where $a(\cdot)$ is the optimal choice for given $R$ in \eqref{eqn:Ex_type_MAP} and choice of $c(y)$ and $\tau>0$ are arbitrary.
	\label{cor:EX_exponent_comparison}
\end{corollary}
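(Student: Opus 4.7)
The proof will consist of four parts: two immediate inequalities, one technically harder inequality, and the two equality conditions.

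The first inequality $E_{q,\mathrm{ex}}(R) \leq E^{\rm tt}_{q,\mathrm{ex}}(R)$ in \eqref{eqn:EX_exponent_inequalities1} follows immediately by restricting the supremum over $a(\cdot)$ in \eqref{eqn:exp_Type_EX} to constant functions, which makes $e^{a(\bar{x})}/e^{a(x)}=1$ and recovers \eqref{eqn:exp_St_EX}. The third inequality \eqref{eqn:EX_exponent_inequalities2} will then follow by specializing the chained inequalities \eqref{eqn:EX_exponent_inequalities1} to the matched metric $q=P_{X|Y}$, combined with the fact (verified below as a special case of the first equality condition, with $\tau=1$, $b\equiv 0$, $c\equiv 0$) that for this metric the mismatched type-by-type expurgated exponent coincides with its matched counterpart.

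The middle inequality $E^{\rm tt}_{q,\mathrm{ex}}(R) \leq E^{\rm tt}_{\mathrm{ex}}(R)$ is the technical core. The plan is to apply Cauchy--Schwarz to the inner sum over $y$ using the factorization $\sqrt{P_{Y|X}(y|x)P_{Y|X}(y|\bar{x})} = \sqrt{P_{Y|X}(y|x)(q(\bar{x},y)/q(x,y))^{s}}\cdot\sqrt{P_{Y|X}(y|\bar{x})(q(x,y)/q(\bar{x},y))^{s}}$, obtaining the pointwise bound $B(x,\bar{x})^{2} \leq A_{s}(x,\bar{x})\,A_{s}(\bar{x},x)$ on the Bhattacharyya coefficient in terms of the tilted sum $A_{s}(x,\bar{x}):=\sum_{y}P_{Y|X}(y|x)(q(\bar{x},y)/q(x,y))^{s}$. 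I would then substitute this bound into the log-argument of \eqref{eqn:Ex_type_MAP}, apply a second Cauchy--Schwarz on the inner sum over $\bar{x}$ with suitable weights $e^{a(\bar{x})/(2\rho)}$, and invoke a third Cauchy--Schwarz on the outer sum over $x$ to symmetrize the resulting factors of $A_{s}(x,\bar{x})$ and $A_{s}(\bar{x},x)$. The intended upshot is that the matched log-argument is upper bounded by the geometric mean of two instances of the mismatched log-argument at suitably relabeled $a(\cdot)$, from which the inequality follows.

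The equality conditions are handled by direct substitution. For $q(x,y)=e^{b(x)+c(y)}P_{X|Y}(x|y)^{\tau}$, the $c(y)$ cancels in the ratio $q(\bar{x},y)/q(x,y)$; the factor $e^{s[b(\bar{x})-b(x)]}$ can be absorbed via the shift $a(x)\to a(x)+sb(x)$; and choosing $s\tau=1/2$ converts the inner $s$-tilted sum over $y$ into $\sqrt{P_{X}(\bar{x})/P_{X}(x)}\,B(x,\bar{x})$ with $B(x,\bar{x}):=\sum_{y}\sqrt{P_{Y|X}(y|x)P_{Y|X}(y|\bar{x})}$. A final reparametrization $a(x)\to a(x)+(1/2)\log P_{X}(x)$ brings the expression into exactly the form of \eqref{eqn:Ex_type_MAP}, so that $E^{\rm tt}_{q,\mathrm{ex}}(R) \geq E^{\rm tt}_{\mathrm{ex}}(R)$, and the matching upper bound is supplied by the middle inequality. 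For $q(x,y)=e^{c(y)}(e^{a(x)}\sqrt{P_{Y|X}(y|x)})^{\tau}$ with $a(\cdot)$ the optimizer of \eqref{eqn:Ex_type_MAP} at rate $R$, substituting into \eqref{eqn:exp_St_EX} at $s=1/\tau$ collapses the inner sum over $y$ into $P_{X}(x)(e^{a(\bar{x})}/e^{a(x)})\,B(x,\bar{x})$; matching $\rho$ with the corresponding optimizer in \eqref{eqn:Ex_type_MAP} then gives $E_{q,\mathrm{ex}}(R) \geq E^{\rm tt}_{\mathrm{ex}}(R)$, and the matching upper bound follows from \eqref{eqn:EX_exponent_inequalities1}. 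The main obstacle throughout is the middle inequality: correctly managing the iterated Cauchy--Schwarz applications in the presence of the $\rho$-power structure and producing the correct matched-side $a'(\cdot)$ from the given mismatched-side data $(s,a(\cdot),q)$; the remaining parts reduce to algebraic substitution and reparametrization.
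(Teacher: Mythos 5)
Most of your proposal is sound and close to the paper's own algebra: the first inequality in \eqref{eqn:EX_exponent_inequalities1} by taking $a(\cdot)$ constant, the equality under \eqref{eqn:Ex_Type_optimal_decoders} by choosing $s\tau=\tfrac12$ and absorbing $sb(x)+\tfrac12\log P_X(x)$ into the cost, the equality under \eqref{eqn:Ex_St_optimal_decoders} by setting $s=1/\tau$ in \eqref{eqn:exp_St_EX}, and \eqref{eqn:EX_exponent_inequalities2} by specializing the chain to $q=P_{X|Y}$. However, every one of these conclusions (the equalities need the matching upper bound, and \eqref{eqn:EX_exponent_inequalities2} needs the chain) rests on the middle inequality $E^{\rm tt}_{q,\mathrm{ex}}(R)\le E^{\rm tt}_{\mathrm{ex}}(R)$, and your proposed dual-domain proof of that step does not close.

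Concretely, with $A_s(x,\bar x)=\sum_y P_{Y|X}(y|x)\bigl(q(\bar x,y)/q(x,y)\bigr)^s$ your Cauchy--Schwarz steps give, for the matched log-argument $M$ and the mismatched one $G$ in \eqref{eqn:exp_Type_EX}, a bound of the form $M\le\sqrt{G\cdot\tilde G}$, where $\tilde G=\sum_x P_X(x)\bigl(\sum_{\bar x}\bigl(e^{a(x)-a(\bar x)}A_s(\bar x,x)\bigr)^{1/\rho}\bigr)^{\rho}$ is a \emph{reversed} functional: inside the $y$-sum the conditioning variable is the inner index $\bar x$ and the metric ratio is inverted, while the outer weight $P_X(x)$ and the $\rho$-power stay attached to $x$. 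No relabeling of the cost $a(\cdot)$ turns $\tilde G$ into an instance of \eqref{eqn:exp_Type_EX}, and the swap $x\leftrightarrow\bar x$ does not commute with the outer $P_X$-weighted $\rho$-power structure unless $\rho=1$; so the claimed ``geometric mean of two instances of the mismatched log-argument'' is unsubstantiated, and without separately controlling $\tilde G$ (e.g.\ showing $\tilde G\le G$, which is not established and not obvious for general $q$, $s$, $a$) the inequality does not follow. The paper sidesteps this entirely: the second inequality in \eqref{eqn:EX_exponent_inequalities1} is obtained from Proposition \ref{pr:equaiv} (primal-dual equivalence) together with \cite[Lemma 4]{Csi81}, i.e.\ a primal-domain optimality statement for the ML decoder within the type-by-type/$\alpha$-decoder framework, and the equality under \eqref{eqn:Ex_Type_optimal_decoders} is completed by invoking the optimality of $s'=\tfrac12$ from \cite{scarlett2014} rather than a symmetrization of the kind you attempt. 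Either adopt that route for the middle inequality, or supply a genuinely new argument that bounds the reversed factor; as written, this step is a gap that propagates to your proofs of both equality claims and of \eqref{eqn:EX_exponent_inequalities2}.
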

\begin{proof}
		The first inequality in \eqref{eqn:EX_exponent_inequalities1} follows by setting $a(x)$ equal to a constant for all $x$ in \eqref{eqn:exp_Type_EX}. The second inequality in \eqref{eqn:EX_exponent_inequalities1} follows from \cite[Lemma 4]{Csi81} and Proposition \ref{pr:equaiv}.
		
		For a metric of the form given in \eqref{eqn:Ex_Type_optimal_decoders}, after some simplification we can rewrite \eqref{eqn:exp_Type_EX} as
		\begin{equation}
			E^{\rm tt}_{q,\mathrm{ex}}(R) = \sup_{\rho\geq 1,s'\geq 0,r(\cdot)}\rho R-\log\sum\limits_{x\in\Xc}\left(\sum\limits_{\bar{x}\in\Xc}\left(\sum\limits_{y\in\Yc}P_{X,Y}(x,y)\frac{e^{r(\bar{x})}}{e^{r(x)}}\left(\frac{P_{Y|X}(y|\bar{x})}{P_{Y|X}(y|x)}\right)^{s'}\right)^{\frac{1}{\rho}}\right)^{\rho},\label{eqn:Ex_type_MAP_equiv}
		\end{equation}
		where $s'=\tau s$ and $r(x)=\rho a(x)+sb(x)+\tau s\log P_{X}(x)$. As can be found in Section \ref{sec:proof_primal_dual}, we can rewrite this in similar form to \eqref{eqn:Expurgated_exp_primal} as 
		\begin{equation}
			E^{\rm tt}_{q,\mathrm{ex}}(R) = \min_{Q}\left[ D(Q\|P_{X})+\sup_{\rho\geq 1}\left[E_{x}^{\rm cc}(Q,\rho)-\rho(H(Q)-R)\right]\right],
		\end{equation}
		where 
		\begin{align}
			E_{\rm{x}}^{\rm cc}(Q,\rho)=\sup_{s'\geq 0,r(\cdot)}-\rho\sum\limits_{x\in\Xc}Q(x)\log\sum\limits_{\bar{x}\in\Xc}Q(\bar{x})\left(\sum\limits_{y\in\Yc}P_{Y|X}(y|x)\frac{e^{r(\bar{x})}}{e^{r(x)}}\left(\frac{P_{Y|X}(y|\bar{x})}{P_{Y|X}(y|x)}\right)^{s'}\right)^{\frac{1}{\rho}}.\label{eqn:Ex_CC} 
		\end{align}
		It has been shown in \cite{scarlett2014} that $s'=\frac{1}{2}$ optimizes \eqref{eqn:Ex_CC} while the optimal choice of $r(\cdot)$ is unclear in general. Replacing the latter optimal choice in \eqref{eqn:Ex_type_MAP_equiv} and further simplification results in \eqref{eqn:Ex_type_MAP}.
		
		Replacing the metric $q(x,y)$ in \eqref{eqn:exp_St_EX} with \eqref{eqn:Ex_St_optimal_decoders} and setting $s=\frac{1}{\tau}$ results in \eqref{eqn:Ex_type_MAP}.
		
		The inequality in \eqref{eqn:EX_exponent_inequalities2} follows by setting $r(x)=s'\log P_{X}(x)$ in \eqref{eqn:Ex_type_MAP_equiv} which is an equivalent form of \eqref{eqn:Ex_type_MAP}.
\end{proof}

Corollary \ref{cor:EX_exponent_comparison} shows that both type-by-type and standard ensembles recover $E^{\rm tt}_{\mathrm{ex}}(R)$ with a family of metrics given by \eqref{eqn:Ex_Type_optimal_decoders} and \eqref{eqn:Ex_St_optimal_decoders}, where the earlier recovers the exponent for full range of rates with the same metric whilst latter recovers it with a different metric for each rate $R$.	
	
\subsection{Special case: no side information}
Specializing the result of Theorem \ref{th:standard} to block source coding without side information, we recover the following exponent for the case of using a mismatched metric $q(x)$ as
\begin{equation}
	E_{q}(R) = \sup_{\rho\geq 0,s\geq 0}\rho R-\log\sum\limits_{x\in\Xc}P_{X}(x)\bigg(\sum\limits_{\bar{x}\in\Xc}\bigg(\frac{q(\bar{x})}{q(x)}\bigg)^s\bigg)^{\rho}.
\end{equation}
For $q(x) = \frac{P_X(x)^{\tau}}{\sum\limits_{\bar x}P_X(\bar x)^{\tau}}$, $\tau>0$, which includes the matched decoding as special case, the above exponent recovers the exponent of the optimal code as \cite{jelinek1968probabilistic}
\begin{equation}
	E(R) = \sup_{\rho\geq 0}\rho R-\log\bigg(\sum\limits_{x\in\Xc}P_{X}(x)^{\frac{1}{1+\rho}}\bigg)^{1+\rho}.\label{eqn:no_side_info_matched_exponent}
\end{equation}

Theorem \ref{th:tbt} recovers \eqref{eqn:no_side_info_matched_exponent} independent of the employed decoding metric. In particular, by setting $a(x)=\frac{1}{1+\rho}\log P_X(x)$ and $s=0$, the type-by-type exponent recovers the exponent of the optimal source code \eqref{eqn:no_side_info_matched_exponent}.

\subsection{Achievable rates}

In this section, we derive the achievable rates for both standard and type-by-type random coding ensembles. Noticing that exponent is a convex function of the rate and the maximizing $\rho$ is the slope of the exponent curve, the achievable rates for both random coding schemes can be obtained similarly to \cite[Ch. 5]{gallager1968ita} by evaluating the partial derivative of the exponent to find the rate at which $\rho=0$ maximizes the exponent.

For the standard random coding case using \eqref{eqn:exp_St_RC} we find the achievable rate as
\begin{align}
	H_{q}(X|Y)=&\inf_{s\geq 0}-\sum\limits_{x\in\Xc,y\in\Yc}P_{XY}(x,y)\log\frac{q(x,y)^s}{\sum\limits_{\bar{x}\in\Xc}q(\bar{x},y)^s}\label{eq:rate_standard}\\
	=&H(X|Y)+\inf_{s\geq 0}D(P_{X|Y}\|Q^{(s)}_{X|Y}),	
\end{align}
where $Q^{(s)}_{X|Y}(x|y)=\frac{q(x,y)^s}{\sum\limits_{\bar{x}\in\Xc}q(\bar{x},y)^s}$.

Similarly for the type-by-type random coding case using \eqref{eqn:exp_Type_RC} we find the achievable rate as
\begin{align}
	H^{\rm tt}_{q}(X|Y)=&\inf_{s\geq 0,a(\cdot)}-\sum\limits_{x\in\Xc,y\in\Yc}P_{XY}(x,y)\log\frac{q(x,y)^se^{a(x)}}{\sum\limits_{\bar{x}\in\Xc}q(\bar{x},y)^se^{a(\bar{x})}}\label{eq:rate_tbt}\\
	=&H(X|Y)+\inf_{s\geq 0,a(\cdot)}D(P_{X|Y}\|Q^{(s,a(\cdot))}_{X|Y}),
\end{align}
where $Q^{(s,a(\cdot))}_{X|Y}(x|y)=\frac{q(x,y)^se^{a(x)}}{\sum\limits_{\bar{x}\in\Xc}q(\bar{x},y)^se^{a(\bar{x})}}$.

By inspecting the optimization problems in \eqref{eq:rate_standard} and \eqref{eq:rate_tbt} we have that 
\begin{equation}
H_{q}(X|Y) \geq H^{\rm tt}_{q}(X|Y).
\end{equation}
The expressions of the above achievable rates bear a strong resemblance to their channel coding counterparts, the generalized mutual information \cite{kaplan1993information} achieved by iid coding, and the LM rate \cite{huimis,Csi81} achieved by constant composition codes. Indeed, as shown next, these achievable rates can be expressed as a function of the generalized mutual information and the LM rate.

\begin{corollary}
	The achievable rate $H_{q}(X|Y)$ is related to the generalized mutual information \cite{kaplan1993information} of corresponding channel $W_{Y|X}(y|x)=\frac{P_{XY}(x,y)}{P_{X}(x)}$ with decoding metric $\bar{q}(x,y)=q(x,y)P(x)^{-\frac{1}{s^*}}$ as
	\begin{equation}
		H_{q}(X|Y)=H(X)-I^{\rm GMI}_{\bar{q}}(P_{X},W_{Y|X})\label{eqn:rate_standars_GMI_ralation}
	\end{equation}
	where $s^*$ is the optimizing parameter in \eqref{eq:rate_standard} and 
	\begin{equation}
I^{\rm GMI}_{\bar{q}}(P_{X},W_{Y|X}) = \sup_{s>0}\sum\limits_{x\in\Xc,y\in\Yc}P_{XY}(x,y)\log\frac{\bar{q}(x,y)^s}{\sum\limits_{\bar{x}\in\Xc}P_{X}(\bar{x})\bar{q}(\bar{x},y)^s}
\end{equation}
is the generalized mutual information \cite{kaplan1993information}.
\end{corollary}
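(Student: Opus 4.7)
The plan is to verify the identity by substituting the explicit form of $\bar{q}$ into the GMI expression and then showing that the GMI supremum is attained precisely at $s = s^*$, at which point the objective simplifies to $H(X) - H_q(X|Y)$.

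First, I would substitute $\bar{q}(x,y) = q(x,y)\, P_X(x)^{-1/s^*}$ into the GMI, producing
\[
I^{\rm GMI}_{\bar q}(P_X, W_{Y|X}) = \sup_{s > 0}\sum_{x,y} P_{XY}(x,y)\log\frac{q(x,y)^s\, P_X(x)^{-s/s^*}}{\sum_{\bar x} q(\bar x, y)^s\, P_X(\bar x)^{1-s/s^*}}.
\]
The crucial observation is that at $s = s^*$ the exponent $1 - s/s^*$ vanishes, so $P_X(\bar x)^{1-s/s^*}=1$; the denominator collapses to $\sum_{\bar x} q(\bar x, y)^{s^*}$, exactly matching the denominator appearing in \eqref{eq:rate_standard}, and the numerator becomes $q(x,y)^{s^*}/P_X(x)$.

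Second, I would evaluate the GMI objective at $s = s^*$. Splitting the logarithm yields two pieces: the term $-\sum_{x,y} P_{XY}(x,y) \log P_X(x)$ equals $H(X)$, while the remaining piece $\sum_{x,y} P_{XY}(x,y) \log [q(x,y)^{s^*}/\sum_{\bar x} q(\bar x, y)^{s^*}]$ equals $-H_q(X|Y)$ by the definition of $s^*$ as the optimizer in \eqref{eq:rate_standard}. This immediately gives the bound $I^{\rm GMI}_{\bar q}(P_X, W_{Y|X}) \geq H(X) - H_q(X|Y)$, equivalently $H_q(X|Y) \geq H(X) - I^{\rm GMI}_{\bar q}(P_X, W_{Y|X})$.

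Third, to upgrade this lower bound to equality I would show that $s = s^*$ is the sup-optimizer for the GMI. Differentiating the GMI objective with respect to $s$ and evaluating at $s^*$, and invoking the stationarity condition that defines $s^*$ in \eqref{eq:rate_standard} (namely $\mathbb{E}_{P_{XY}}[\log q(X,Y)] = \sum_y P_Y(y)\,\mathbb{E}_{Q^{*}(\cdot|y)}[\log q(\bar X, y)]$ with $Q^*(\bar x|y) \propto q(\bar x, y)^{s^*}$), the $\log q$ terms cancel; combined with the concavity of the GMI objective in $s$ (inherited from the log-sum structure in the denominator), this identifies $s^*$ as a global maximizer and closes the argument. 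The technical core of the proof is this third step: the algebraic identity at $s = s^*$ is immediate, but tying the sup-optimizer of the GMI to the inf-optimizer of $H_q$ relies on the special tilting by $P_X(x)^{-1/s^*}$, which is precisely designed so that the first-order stationarity conditions of the two optimization problems align.
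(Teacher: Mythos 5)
Your steps 1 and 2 are sound and reproduce the algebraic core of the paper's argument: at $s=s^*$ the factor $P_X(\bar x)^{1-s/s^*}$ disappears, the GMI objective evaluates to $H(X)-H_q(X|Y)$, and you obtain $H_q(X|Y)\ge H(X)-I^{\rm GMI}_{\bar q}(P_X,W_{Y|X})$. The gap is in step 3: the first-order conditions of the two problems do \emph{not} align. Writing $g(s)=\sum_{x,y}P_{XY}(x,y)\log\frac{\bar q(x,y)^s}{\sum_{\bar x}P_X(\bar x)\bar q(\bar x,y)^s}$ with the metric frozen at $\bar q(x,y)=q(x,y)P_X(x)^{-1/s^*}$, at $s=s^*$ the tilted law inside the derivative is $Q^*(\bar x|y)\propto q(\bar x,y)^{s^*}$ and $\log\bar q=\log q-\frac{1}{s^*}\log P_X$; the $\log q$ parts indeed cancel by the stationarity condition defining $s^*$, but the $\log P_X$ parts do not, leaving
\begin{equation}
g'(s^*)=\frac{1}{s^*}\sum_{x\in\Xc}\big(\bar Q(x)-P_X(x)\big)\log P_X(x),\qquad \bar Q(x)=\sum_{y\in\Yc}P_Y(y)\,\frac{q(x,y)^{s^*}}{\sum_{x'\in\Xc}q(x',y)^{s^*}},
\end{equation}
which is nonzero whenever the output-averaged tilted distribution $\bar Q$ differs from $P_X$. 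It vanishes in the matched case (where $Q^*=P_{X|Y}$, hence $\bar Q=P_X$) and for uniform $P_X$, but not for a generic mismatched $q$ with non-uniform $P_X$. Since $g$ is concave in $s$ and $s^*$ is interior, a nonzero $g'(s^*)$ means $s^*$ is not the maximizer of the fixed-metric GMI objective, so concavity plus the cancellation of the $\log q$ terms cannot upgrade your step-2 inequality to the claimed equality; the ``special tilting'' by $P_X(x)^{-1/s^*}$ does not make the stationarity conditions coincide.

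For comparison, the paper takes a different route that never invokes such an optimality claim for the GMI parameter: it proves the identity pointwise in $s$, namely $H_{q,s}(X|Y)=H(X)-I_{\bar q,s}(P_X,W_{Y|X})$ with the normalization $P_X(x)^{-1/s}$ tied to the \emph{same} running $s$ (absorbing $-\sum_x P_X(x)\log P_X(x)$ into the numerator and $P_X(\bar x)$ into the denominator), and then takes the infimum over $s$, reading off the relation at the optimizing $s^*$; the identification of the resulting value with the sup-form $I^{\rm GMI}_{\bar q}$ at the fixed metric $\bar q_{s^*}$ is exactly the point your step 3 tries, and fails, to justify. If you want to keep your fixed-metric route, you must prove directly that $\sup_{s>0}g(s)=g(s^*)$, and the derivative computation above shows this is the delicate step rather than an automatic consequence of the stationarity of $s^*$ in \eqref{eq:rate_standard}.
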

\begin{proof}
	Denoting the objective in \eqref{eq:rate_standard} as $H_{q,s}(X|Y)$ we have
	\begin{align}
		H_{q,s}(X|Y)&=-\sum\limits_{x\in\Xc,y\in\Yc}P_{XY}(x,y)\log\frac{q(x,y)^s}{\sum\limits_{\bar{x}\in\Xc}q(\bar{x},y)^s}-\sum\limits_{x\in\Xc}P_{X}(x)\log P_{X}(x)+H(X)\\
		&=-\sum\limits_{x\in\Xc,y\in\Yc}P_{XY}(x,y)\log\frac{q(x,y)^s}{P_{X}(x)\sum\limits_{\bar{x}\in\Xc}q(\bar{x},y)^s}+H(X)\\
		&=-\sum\limits_{x\in\Xc,y\in\Yc}P_{XY}(x,y)\log\frac{\bar{q}(x,y)^s}{\sum\limits_{\bar{x}\in\Xc}P_{X}(\bar{x})\bar{q}(\bar{x},y)^s}+H(X)\label{eq:norm_metric_gmi}\\
		&=-I_{\bar{q},s}(P_{X},W_{Y|X})+H(X)
	\end{align}
	where in \eqref{eq:norm_metric_gmi} we have used the definition of decoding metric $\bar{q}(x,y)=q(x,y)P(x)^{-\frac{1}{s}}$.
	Taking the infimum of $H_{q,s}(X|Y)$ over $s$ and denoting the corresponding $s$ by $s^*$ we obtain \eqref{eqn:rate_standars_GMI_ralation}.
	
\end{proof}

\begin{corollary}
	The achievable rate $H^{\rm tt}_{q}(X|Y)$ is related to the LM rate \cite{huimis,Csi81} of corresponding channel as
	\begin{equation}
		H^{\rm tt}_{q}(X|Y)=H(X)-I^{\rm LM}_{q}(P_{X},W_{Y|X}).\label{eqn:rate_standars_LM_ralation}
	\end{equation}
	where 
	\begin{equation}
I^{\rm LM}_{q}(P_{X},W_{Y|X}) = \sup_{s>0, b(\cdot)}\sum\limits_{x\in\Xc,y\in\Yc}P_{XY}(x,y)\log\frac{q(x,y)^se^{b(x)}}{\sum\limits_{\bar{x}\in\Xc}P_{X}(\bar{x})q(\bar{x},y)^se^{b(\bar{x})}}
\end{equation}
is the LM rate \cite{huimis,Csi81}.

\end{corollary}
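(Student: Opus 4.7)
My plan is to mirror the proof of the previous corollary (the GMI connection) almost verbatim, with the only new ingredient being a change of variables that moves $P_X$ from outside the log into the denominator of the LM kernel. Concretely, I denote by $H^{\rm tt}_{q,s,a}(X|Y)$ the objective in \eqref{eq:rate_tbt} before the infimum, so that $H^{\rm tt}_q(X|Y)=\inf_{s\ge 0,\,a(\cdot)}H^{\rm tt}_{q,s,a}(X|Y)$. The goal is to rewrite $H^{\rm tt}_{q,s,a}(X|Y)$ as $H(X)$ minus the inner expression in the definition of $I^{\rm LM}_q(P_X,W_{Y|X})$, indexed by new parameters $(s,b(\cdot))$ that are in bijection with $(s,a(\cdot))$.

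First, I introduce the change of variables $a(x)=b(x)+\log P_X(x)$, which is a bijection between arbitrary functions $a(\cdot)$ and $b(\cdot)$ on $\Xc$, so optimizing over one is the same as optimizing over the other. Under this substitution, $e^{a(x)}=P_X(x)\,e^{b(x)}$ and $e^{a(\bar{x})}=P_X(\bar{x})\,e^{b(\bar{x})}$, and the ratio inside the logarithm in \eqref{eq:rate_tbt} becomes
\begin{equation}
\frac{q(x,y)^s e^{a(x)}}{\sum_{\bar{x}\in\Xc}q(\bar{x},y)^s e^{a(\bar{x})}}
=\frac{P_X(x)\,q(x,y)^s e^{b(x)}}{\sum_{\bar{x}\in\Xc}P_X(\bar{x})\,q(\bar{x},y)^s e^{b(\bar{x})}}.
\end{equation}

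Second, I split the logarithm: the numerator factor $P_X(x)$ produces an extra term $-\sum_{x,y}P_{XY}(x,y)\log P_X(x)=H(X)$, while the remainder reproduces exactly the summand in the definition of $I^{\rm LM}_q$. This gives
\begin{equation}
H^{\rm tt}_{q,s,a}(X|Y)=H(X)-\sum_{x\in\Xc,y\in\Yc}P_{XY}(x,y)\log\frac{q(x,y)^s e^{b(x)}}{\sum_{\bar{x}\in\Xc}P_X(\bar{x})\,q(\bar{x},y)^s e^{b(\bar{x})}}.
\end{equation}

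Third, I take the infimum over $s\ge 0$ and $a(\cdot)$ on both sides. The right-hand side becomes $H(X)$ minus a supremum over $s\ge 0$ and $b(\cdot)$, which by the bijection above and by the definition of the LM rate equals $H(X)-I^{\rm LM}_q(P_X,W_{Y|X})$, as desired. There is no real obstacle here: the argument is a routine identity, and the only thing one has to notice is the correct translation $a(x)=b(x)+\log P_X(x)$, which places the source distribution $P_X(\bar x)$ inside the normalizing sum so that the expression matches the LM form (whereas in the GMI case the analogous substitution is $\bar q(x,y)=q(x,y)P_X(x)^{-1/s}$, absorbed into the metric rather than into the auxiliary function).
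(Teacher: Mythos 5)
Your proposal is correct and is essentially the paper's own proof: the paper performs the identical change of variables $e^{a(x)}=P_X(x)e^{b(x)}$ and the same bookkeeping that extracts $H(X)$ from the $\log P_X(x)$ factor, only ordering the two steps the other way (add-and-subtract $H(X)$ first, then substitute). Nothing further is needed.
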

\begin{proof}
	Denoting the objective in \eqref{eq:rate_tbt} as $H^{\rm tt}_{q,s,a}(X|Y)$ we have
	\begin{align}
		H^{\rm tt}_{q,s,a}(X|Y)&=-\sum\limits_{x\in\Xc,y\in\Yc}P_{XY}(x,y)\log\frac{q(x,y)^se^{a(x)}}{\sum\limits_{\bar{x}\in\Xc}q(\bar{x},y)^se^{a(\bar{x})}}-\sum\limits_{x\in\Xc}P_{X}(x)\log P_{X}(x)+H(X)\\
		&=-\sum\limits_{x\in\Xc,y\in\Yc}P_{XY}(x,y)\log\frac{q(x,y)^se^{a(x)}}{P_{X}(x)\sum\limits_{\bar{x}\in\Xc}q(\bar{x},y)^se^{a(\bar{x})}}+H(X)\\
		&=-\sum\limits_{x\in\Xc,y\in\Yc}P_{XY}(x,y)\log\frac{q(x,y)^se^{b(x)}}{\sum\limits_{\bar{x}\in\Xc}P_{X}(\bar{x})q(\bar{x},y)^se^{b(\bar{x})}}+H(X)\label{eqn:replace_cost}\\
		&=-I_{q,s,b}(P_{X},W_{Y|X})+H(X),
	\end{align}
	where \eqref{eqn:replace_cost} follows from replacing $e^{a(x)}=P_{X}(x)e^{b(x)}$. Now taking the infimum of $H^{\rm tt}_{q,s,a}(X|Y)$ over $s$ and $a(\cdot)$ we obtain \eqref{eqn:rate_standars_LM_ralation}.
\end{proof}
The achievability of the rate in \eqref{eqn:rate_standars_LM_ralation}, expressed in the primal domain, was observed in \cite[Corollary 1]{Chen2009}.

\subsection{Numerical example}

We conclude this section with a numerical example. The joint distribution of the source $X$ with side information $Y$ is defined by the entries of the $|\Xc|\times|\Yc|$ matrix
\begin{equation}
	P_{XY}=\begin{bmatrix}
		0.588 & 0.006 & 0.006 \\
		0.03 & 0.24 & 0.03 \\
		0.02 & 0.02 & 0.06 
	\end{bmatrix}
	\label{eqn:example_joint_distribution}
\end{equation}
with $\Xc=\Yc=\{0,1,2\}$. We consider using a mismatched decoder with a memoryless metric given by the matrix
\begin{equation}
	q(x,y)=\begin{bmatrix}
		1-2\delta & \delta & \delta \\
		\delta & 1-2\delta & \delta \\
		\delta & \delta & 1-2\delta
	\end{bmatrix}
	\label{eqn:example_decoding_metic}
\end{equation}
with $\delta\in(0,\frac{1}{3})$. For any value of the $\delta\in(0,\frac{1}{3})$, the corresponding decoder is exactly the same and is equivalent to a minimum Hamming distance decoding metric. This is because for any $\delta$ and $\delta'$ in this range one can find an $s'\geq 0$ such that $(\frac{1-2\delta}{\delta})^{s}=(\frac{1-2\delta'}{\delta'})^{s'}$, so the optimization over $s$ makes the explicit choice of the $\delta$ irrelevant as long as it is in the range of $(0,\frac{1}{3})$.

Figure \ref{fig:exponent} illustrates the exponents for the standard and type-by-type ensembles with both matched and mismatched decoders. The sphere-packing upper bound is also shown for reference. We observe that the type-by-type expurgated exponent is higher for both matched and mismatched decoding. Indeed, as discussed earlier, in the matched case, the random coding components of $E_q(R)$ and $E^{\rm tt}_q(R)$ (for the rates below the critical rate $R_{\rm cr}$) both coincide with the sphere-packing upper-bound. The corresponding achievable rates are marked with dots in the figure. The conditional entropy $H(X|Y)=0.3879$ nats is the limit for the matched case while in the mismatched case we respectively have $H_{q}(X|Y)=0.4283$ nats and $H^{\rm tt}_{q}(X|Y)=0.4140$ nats. 
\begin{figure}
	\centering
	\input{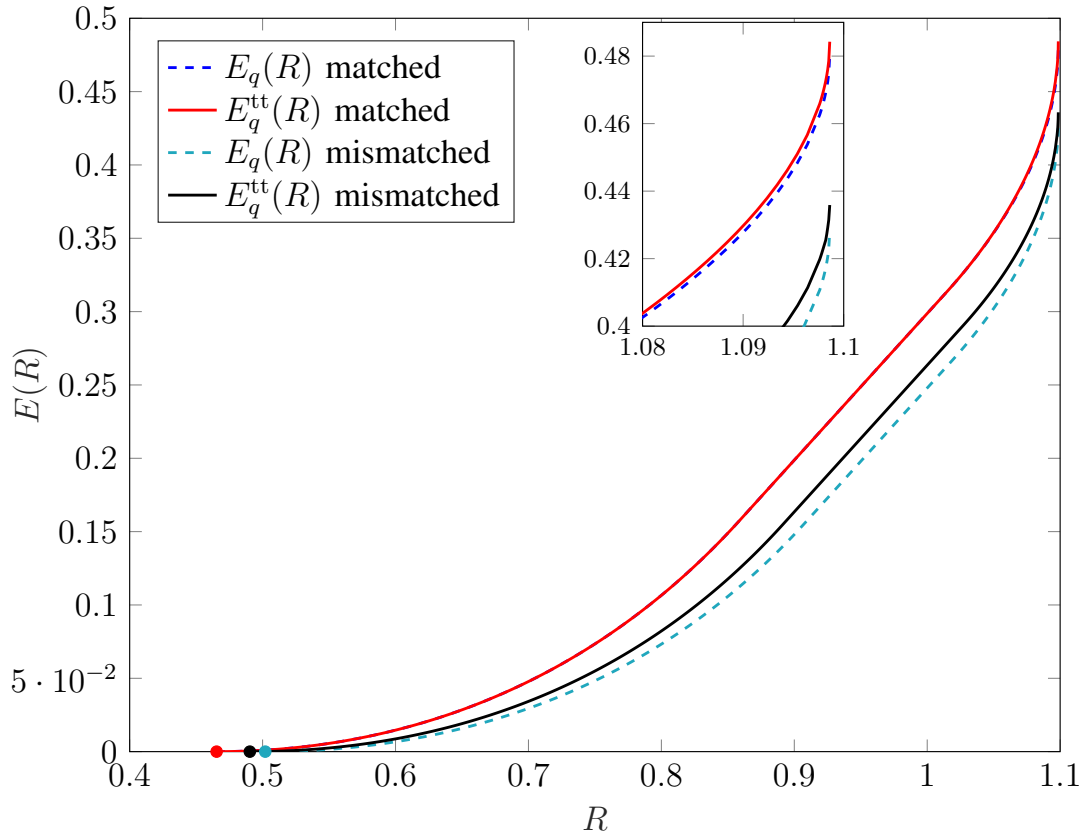}
	\caption{Error exponents for the source $X$ and side information $Y$ with joint distribution given in \eqref{eqn:example_joint_distribution}. The mismatched decoder uses the minimum Hamming distance metric.}
	\label{fig:exponent}
\end{figure}

\section{Expurgation method}
\label{sec:exp}

In this section we introduce the expurgation method for source coding that is valid for any source and side information model with arbitrary decoding metric. The method is based on randomly generating codes for the source and removing (expurgating) poor source sequences from the codes. We then show that, there exists a code in the expurgated ensemble, such that half of the source sequences meets a desired upper bound on the error probability. Repeating this procedure enough times we show that there exists a code in which all the source sequences meet the desired upper bound on the error probability. Details of the expurgation method are given in the following, first for the standard random coding and then for type-by-type random coding.

\subsection{Standard Block Coding}
\label{subsec:standard}

Given a standard block source code $\Cc$, the probability that decoder $\psi$ makes an error for a source sequence $\xb$ is given by
\begin{equation}
	p_e(\xb,\Cc)=\mathbb{P}\left[\bigcup_{\substack{\barxb\in\Xcn \\ \barxb\neq\xb}}\left\{q(\barxb,\Yb)\geq q(\xb,\Yb) , \phi(\barxb)=\phi(\xb)\right\}\right].
	\label{eqn:prob_union_error_events}
\end{equation}

Considering the standard random coding ensemble $\Csf$ defined in Section \ref{sec:intro}, $p_e(\xb,\Csf)$ is a random variable for every $\xb$. 
Applying Markov's inequality to this random variable, we obtain
\begin{equation}
	\mathbb{P}\Big[p_e(\xb,\Csf)^{\frac{1}{\rho}}\geq \eta\mathbb{E}\Big[p_e(\xb,\Csf)^{\frac{1}{\rho}}\Big]\Big]
	\leq\frac{1}{\eta},
\end{equation}
for any $\rho>0$ and $\eta>1$. Equivalently we have
\begin{equation}
	\mathbb{P}\Big[p_e(\xb,\Csf)^{\frac{1}{\rho}}\leq \eta\mathbb{E}\Big[p_e(\xb,\Csf)^{\frac{1}{\rho}}\Big]\Big]
	\geq1-\frac{1}{\eta}.\label{eqn:bound_Markov_inequality}
\end{equation}

For a given standard block source code $\Cc$, denote by $N_0(\Cc)$ the number of source sequences $\xb\in\Xcn$ that satisfy 
\begin{equation}
	p_e(\xb,\Cc)^{\frac{1}{\rho}}\leq \eta\mathbb{E}\Big[p_e(\xb,\Csf)^{\frac{1}{\rho}}\Big].\label{eqn:bound_sequence_error}
\end{equation}
Over the standard random coding ensemble $\Csf$, $N_0(\Csf)$ is a random variable.
Similarly to \cite[p. 151]{gallager1968ita}, we first observe that the expected value of $N_0(\Csf)$ can be lower bounded as
\begin{align}
	\mathbb{E}\big[N_0(\Csf)\big]&= \EE\bigg[\sum\limits_{\xb\in\Xcn}\indicator\Big\{p_e(\xb,\Csf)^{\frac{1}{\rho}}\leq \eta\mathbb{E}\Big[p_e(\xb,\Csf)^{\frac{1}{\rho}}\Big]\Big\}\bigg]\\
	&=\sum\limits_{\xb\in\Xcn}\EE\Big[\indicator\Big\{p_e(\xb,\Csf)^{\frac{1}{\rho}}\leq \eta\mathbb{E}\Big[p_e(\xb,\Csf)^{\frac{1}{\rho}}\Big]\Big\}\Big]\\
	&=\sum\limits_{\xb\in\Xcn}\PP\Big[p_e(\xb,\Csf)^{\frac{1}{\rho}}\leq \eta\mathbb{E}\Big[p_e(\xb,\Csf)^{\frac{1}{\rho}}\Big]\Big]\\
	&\geq |\Xc|^n\bigg(1-\frac{1}{\eta}\bigg)\label{eqn:bound_num_seq}
\end{align}
where the last step follows from \eqref{eqn:bound_Markov_inequality}.

Now we choose $\eta=2$ which yields $1-\frac{1}{\eta}=\frac{1}{2}$ in \eqref{eqn:bound_num_seq}. Therefore, we conclude that there exists a standard block code $\Cc$ in which at least half of the source sequences satisfy the inequality
\begin{equation}
	p_e(\xb,\Cc)\leq \left(2\mathbb{E}\left[p_e(\xb,\Csf)^{\frac{1}{\rho}}\right]\right)^{\rho}.\label{eqn:bound_seq_error}
\end{equation}
Let us denote the set of above mentioned sequences as $\Ac_1$ where $\Ac_1\subset\Xcn$ and $|\Ac_1|\geq\frac{|\Xcn|}{2}$.

We now construct a code $\Cc_1$ for the set $\Ac_1$ from the code $\Cc$ as follows:
We keep all the sequences $\xb\in\Ac_1$ in $\Cc$, which satisfy \eqref{eqn:bound_seq_error}, and we expurgate rest of the sequences $\xb\in\Xcn\backslash\Ac_1$ from code $\Cc$. Note that removing source sequences can only reduce the error probability of remaining ones.
This construction yields a code $\Cc_1$ which includes only the sequences in $\Ac_1$ and for every $\xb\in\Ac_1$ we have 
\begin{equation}
	p_e(\xb,\Cc_1)\leq \left(2\mathbb{E}\left[p_e(\xb,\Csf)^{\frac{1}{\rho}}\right]\right)^{\rho}.\label{eqn:bound_original_error}
\end{equation}

The rest of sequences in $\Xcn\backslash\Ac_1$ include at most half of the original sequences. We now repeat the procedure to find a good code from the standard random coding ensemble for the sequences in $\Xcn\backslash\Ac_1$ (which we denote for simplicity as $\bar{\Csf}$). Given a code $\bar{\Cc}$ from the ensemble $\bar{\Csf}$, the probability of decoding error for a sequence $\xb\in\Xcn\backslash\Ac_1$ is given by \eqref{eqn:prob_union_error_events} with the only difference that now the corresponding union is over $\barxb\in\Xcn\backslash\Ac_1$ instead of $\xb\in\Xcn$. Therefore, the average error probability of a source sequence in the new ensemble is upper bounded by that of the original ensemble, i.e., for any $\xb\in\Xcn\backslash\Ac_1$ we have
\begin{equation}
	\mathbb{E}\Big[p_e(\xb,\bar{\Csf})^{\frac{1}{\rho}}\Big]\leq\mathbb{E}\Big[p_e(\xb,\Csf)^{\frac{1}{\rho}}\Big].
\end{equation}
Now following similar steps as before we show existence of a code $\bar\Cc$ from the ensemble $\bar{\Csf}$ in which at least half of the remaining $\Xcn\backslash\Ac_1$ source sequences satisfy the inequality
\begin{align}
	p_e(\xb,\bar{\Cc})&\leq \left(2\mathbb{E}\left[p_e(\xb,\bar{\Csf})^{\frac{1}{\rho}}\right]\right)^{\rho}\\
	&\leq\left(2\mathbb{E}\left[p_e(\xb,\Csf)^{\frac{1}{\rho}}\right]\right)^{\rho}.\label{eqn:bound_seq_error_2iter}
\end{align}

We denote by $\Ac_2$ the set of above mentioned sequences satisfying \eqref{eqn:bound_seq_error_2iter} where $\Ac_2\subset\Xcn\backslash\Ac_1$ and $|\Ac_2|\geq\frac{|\Xcn|-|\Ac_1|}{2}$. We now construct a code $\Cc_2$ for the set $\Ac_2$ from the code $\bar\Cc$ following similar expurgation step described above. This construction yields a code $\Cc_2$ which includes only the sequences in $\Ac_2$ and every sequence $\xb\in\Ac_2$ satisfies the desired bound as in \eqref{eqn:bound_original_error}.

The set $\Xcn\backslash\Ac_1\bigcup\Ac_2$ includes at most one forth of the original sequences. By repeating this procedure $k$ times, the set $\Xcn\backslash\bigcup_{i=1}^k\Ac_i$ includes at most a fraction $\frac{1}{2^k}$ of the original sequences. By choosing $k=n\log_{2}|\Xc|$ we guarantee that $\bigcup_{i=1}^k\Ac_i=\Xcn$. Finally we construct the code $\Cc_{\rm ex}$ by combining all $\Cc_i$'s as $\Cc_{\rm ex}=\{\Cc_1,\cdots,\Cc_k\}$. The code $\Cc_{\rm ex}$ includes all the source sequences and every sequence satisfies the desired bound as in \eqref{eqn:bound_original_error}. Considering the number of codewords in each iteration as $\frac{M}{k}$, the total number of codewords of the final code $\Cc_{\rm ex}$ is $M$.

Therefore we proved the following lemma.

\begin{lemma}
	There exists a code $\Cc_{\rm ex}$ in the standard random coding ensemble such that for every source sequence $\xb\in\Xcn$ and $\rho\geq 0$
	\begin{equation}
		p_e(\xb, \Cc_{ex \rm})\leq\Big(2\mathbb{E}\left[p_e(\xb,\Csf)^{\frac{1}{\rho}}\right]\Big)^{\rho}\label{eqn:bound_actual_seq_St},
	\end{equation}
	where the expectation is over the standard random coding ensemble with $\frac{M}{k}$ codewords with $k=n\log_{2}|\Xc|$.
	\label{Lem:expurgation_St}
\end{lemma}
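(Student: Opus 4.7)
The plan is to combine a Markov-inequality expurgation argument with an iterative covering procedure, essentially generalizing Gallager's expurgation \cite{Gallager1965} to the source-coding setting where one cannot simply discard a constant fraction of source sequences. First, for the standard random coding ensemble $\Csf$ with $M/k$ codewords (with $k$ to be fixed below), I would apply Markov's inequality to the nonnegative random variable $p_e(\xb,\Csf)^{1/\rho}$ to obtain that, for each fixed $\xb$, with probability at least $1/2$ over the choice of code, $p_e(\xb,\Csf)^{1/\rho} \leq 2\,\mathbb{E}[p_e(\xb,\Csf)^{1/\rho}]$. By linearity of expectation, the expected number of source sequences satisfying this inequality is at least $|\Xc|^n/2$, so there exists a deterministic code $\Cc$ in the ensemble for which at least half of all $\xb\in\Xcn$ satisfy the target bound $p_e(\xb,\Cc) \leq (2\,\mathbb{E}[p_e(\xb,\Csf)^{1/\rho}])^\rho$. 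Call this set $\Ac_1$, and form $\Cc_1$ by retaining only the bin assignments in $\Cc$ corresponding to $\Ac_1$; discarding source sequences only shrinks the error-event union in \eqref{eqn:prob_union_error_events}, so the bound persists for all surviving sequences.

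The second step is an iterative covering that handles the leftover sequences. I would apply the same argument to $\Xcn \setminus \Ac_1$ using an auxiliary ensemble $\bar{\Csf}$ whose codes bin only those residual sequences. The key monotonicity observation is that for any $\xb \in \Xcn\setminus\Ac_1$, the union of confusion events defining $p_e(\xb,\bar{\Csf})$ runs over a strictly smaller set of competitors than under $\Csf$, so $p_e(\xb,\bar{\Csf})^{1/\rho} \leq p_e(\xb,\Csf)^{1/\rho}$ realization-by-realization and hence $\mathbb{E}[p_e(\xb,\bar{\Csf})^{1/\rho}] \leq \mathbb{E}[p_e(\xb,\Csf)^{1/\rho}]$. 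A repetition of the Markov argument then produces a code $\Cc_2$ covering a subset $\Ac_2 \subseteq \Xcn \setminus \Ac_1$ of cardinality at least $|\Xcn \setminus \Ac_1|/2$, with the same bound expressed in terms of the \emph{original} ensemble's moment. Iterating this halving step, after $k = n\log_2|\Xc|$ rounds the uncovered fraction is below $|\Xc|^{-n}$, which forces $\bigcup_{i=1}^k \Ac_i = \Xcn$.

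Finally, I would concatenate $\Cc_1,\ldots,\Cc_k$ into a single code $\Cc_{\rm ex}$ by assigning each $\Cc_i$ a disjoint block of $M/k$ bin indices, so that the total codebook has exactly $M$ codewords as required. Every $\xb \in \Xcn$ belongs to a unique $\Ac_i$ and therefore inherits the desired bound $p_e(\xb,\Cc_{\rm ex}) \leq (2\,\mathbb{E}[p_e(\xb,\Csf)^{1/\rho}])^\rho$. The main delicate point is precisely the monotonicity step above: one must confirm that reducing the set of source sequences cannot increase any surviving sequence's $1/\rho$-moment of error probability, which is what allows the iterated Markov bound to remain anchored to the original ensemble's expectation rather than drifting to a conditional quantity that could grow with each iteration. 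This is immediate once one notices that the error event in \eqref{eqn:prob_union_error_events} is a union indexed by the competing sequences, and therefore monotone in that index set under the pairwise-independent binning rule of $\Csf$.
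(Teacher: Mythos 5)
Your proposal is correct and follows essentially the same route as the paper: Markov's inequality applied to $p_e(\xb,\Csf)^{1/\rho}$, a counting argument showing at least half the sequences meet the bound, expurgation of the rest, iteration on the residual set using the monotonicity $\mathbb{E}\big[p_e(\xb,\bar{\Csf})^{1/\rho}\big]\leq\mathbb{E}\big[p_e(\xb,\Csf)^{1/\rho}\big]$, and finally combining the $k=n\log_2|\Xc|$ subcodes, each with $M/k$ codewords, into $\Cc_{\rm ex}$. The only cosmetic difference is your "realization-by-realization" phrasing of the monotonicity step, which implicitly uses the natural coupling obtained by restricting a code for $\Xcn$ to the residual sequences; the paper states the same inequality directly at the level of expectations.
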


\subsection{Type-by-Type Coding}
\label{subsec:type-by-type}

Given a type-by-type block source code $\Ccb$, the probability that decoder $\psi_i$ yields an error for a source sequence $\xb\in\Tc_n(\hat{P}_i)$ is given by
\begin{equation}
	p_e(\xb,\Cc_i)=\mathbb{P}\left[\bigcup_{\substack{\barxb\in\Tc_n(\hat{P}_i)\\\barxb\neq\xb}}\left\{q(\barxb,\Yb)\geq q(\xb,\Yb) , \phi(\barxb)=\phi(\xb)\right\}\right].
\end{equation}

We notice that the type-by-type ensemble $\Csfb$ is a product of random coding ensembles for each type $\Csf_i$ and we can use the expurgation argument independently for each type. Here we only focus on a given fixed type $\hat{P}_i$. Applying Markov's inequality to random variable $p_e(\xb,\Csf_i)$, we obtain for every $\xb\in\Tc_n(\hat{P}_i)$
\begin{equation}
	\mathbb{P}\Big[p_e(\xb,\Csf_i)^{\frac{1}{\rho}}\geq \eta\mathbb{E}\Big[p_e(\xb,\Csf_i)^{\frac{1}{\rho}}\Big]\Big]
	\leq\frac{1}{\eta},
\end{equation}
for any $\rho>0$ and $\eta>1$. Equivalently we have
\begin{equation}
	\mathbb{P}\Big[p_e(\xb,\Csf_i)^{\frac{1}{\rho}}\leq \eta\mathbb{E}\Big[p_e(\xb,\Csf_i)^{\frac{1}{\rho}}\Big]\Big]
	\geq1-\frac{1}{\eta}.\label{eqn:bound_Markov_inequality_type}
\end{equation}

For a given type-by-type code $\Cc_i$, denote by $N_0(\hat{P}_i,\Cc_i)$ the number of source sequences $\xb\in\Tc_n(\hat{P}_i)$ that satisfy 
\begin{equation}
	p_e(\xb,\Cc_i)^{\frac{1}{\rho}}\leq \eta\mathbb{E}\Big[p_e(\xb,\Csf_i)^{\frac{1}{\rho}}\Big].\label{eqn:bound_sequence_error_type}
\end{equation}
Over the random ensemble $\Csf_i$, $N_0(\hat{P}_i,\Csf_i)$ is a random variable.
Now choosing $\eta=2$, and using \eqref{eqn:bound_Markov_inequality_type} the expected value of $N_0(\hat{P}_i,\Csf_i)$ is bounded below as
\begin{equation}
	\mathbb{E}\big[N_0(\hat{P}_i,\Csf_i)\big]\geq \frac{|\Tc_n(\hat{P}_i)|}{2}.
\end{equation}
Therefore, we conclude that there exists a source code $\Cc_i$ for type $\hat{P}_i$, such that at least half of the sequences of that type satisfy the inequality
\begin{equation}
	p_e(\xb,\Cc_i)\leq \left(2\mathbb{E}\left[p_e(\xb,\Csf_i)^{\frac{1}{\rho}}\right]\right)^{\rho}.\label{eqn:bound_seq_error_type}
\end{equation}
Let us denote the set of above mentioned sequences as $\Ac_{i1}$ where $\Ac_{i1}\subset|\Tc_n(\hat{P}_i)|$ and $|\Ac_{i1}|\geq\frac{|\Tc_n(\hat{P}_i)|}{2}$.

We now construct a code $\Cc_{i1}$ for the set $\Ac_{i1}$ from the code $\Cc_i$ as follows:
We keep all the sequences $\xb\in\Ac_{i1}$ in $\Cc_i$, which satisfy \eqref{eqn:bound_seq_error_type}, and we expurgate rest of the sequences $\xb\in\Tc_n(\hat{P}_i)\backslash\Ac_{i1}$ from code $\Cc_i$. Note that removing source sequences can only reduce the error probability of remaining ones.

The rest of sequences in $\Tc_n(\hat{P}_i)\backslash\Ac_{i1}$ include at most half of the sequences from type $\hat{P}_i$. We now repeat the procedure by considering the random coding ensemble for the remaining sequences $\Tc_n(\hat{P}_i)\backslash\Ac_{i1}$ (denoted by $\bar{\Csf}_i$) and finding a good code $\bar{\Cc}_i$ from this ensemble and constructing a code $\Cc_{i2}$ for the set $\Ac_{i2}$. Similarly to the analysis in \ref{subsec:standard} we apply this procedure successively for at most $k_i=\log_{2}|\Tc_n(\hat{P}_i)|\leq n\log_{2}|\Xc|$ times and we guarantee that $\bigcup_{j=1}^k\Ac_{ij}=\Tc_n(\hat{P}_i)$. Finally we construct the code $\Cc_{{\rm ex},i}$ by combining all codes $\Cc_{ij}$ as $\Cc_{{\rm ex},i}=\{\Cc_{i1},\cdots,\Cc_{ik_i}\}$. The code $\Cc_{{\rm ex},i}$ includes all the source sequences in $\Tc_n(\hat{P}_i)$ and every sequence satisfies the desired bound as in \eqref{eqn:bound_seq_error_type}. Considering the number of codewords in each iteration as $\frac{M}{k_i|\Pc_n(\Xc)|}$, the total number of codewords of the code $\Cc_{{\rm ex},i}$ is $\frac{M}{|\Pc_n(\Xc)|}$. In a similar way we obtain the code for every type and combine those to construct the type-by-type code $\Ccb_{\rm ex}$. 

\begin{lemma}
	There exists a code $\Ccb_{\rm ex}=\{\Cc_{{\rm ex},1},\cdots,\Cc_{{\rm ex},|\Pc_n(\Xc)|}\}$ in the type-by-type random coding ensemble such that for every $i\in\{1,\cdots,|\Pc_n(\Xc)|\}$ and for every source sequence $\xb\in\Tc_n(\hat{P}_i)$ and $\rho\geq 0$
	\begin{equation}
		p_e(\xb,\Ccb_{\rm ex})\leq\Big(2\mathbb{E}\left[p_e(\xb,\Csf_i)^{\frac{1}{\rho}}\right]\Big)^{\rho}\label{eqn:bound_actual_seq_Type},
	\end{equation}
	where the expectation is over the random coding ensemble for the corresponding type with $\frac{M}{k_i|\Pc_n(\Xc)|}$ codewords and $k_i=\log_{2}|\Tc_n(\hat{P}_i)|\leq n\log_{2}|\Xc|$.
	\label{Lem:expurgation_Type}
\end{lemma}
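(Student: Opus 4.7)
The plan is to exploit the product structure of the type-by-type ensemble. By Definition~\ref{def:type_based_RC}, the codes $\Csf_1,\ldots,\Csf_{|\Pc_n(\Xc)|}$ are generated independently of one another, and since the decoder \eqref{eq:dec_tbt} only competes $\xb$ against sequences of its own type, the error probability $p_e(\xb,\Csf_i)$ for $\xb \in \Tc_n(\hat{P}_i)$ depends only on $\Csf_i$. Consequently, the expurgation argument that established Lemma~\ref{Lem:expurgation_St} can be applied in parallel, one type class at a time, and the resulting per-type codes can simply be concatenated.

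Concretely, I would fix a type $\hat{P}_i$, a parameter $\rho \geq 0$, and $\eta = 2$, and apply Markov's inequality to the non-negative random variable $p_e(\xb,\Csf_i)^{1/\rho}$ to obtain \eqref{eqn:bound_Markov_inequality_type}. Summing the resulting indicator over $\xb \in \Tc_n(\hat{P}_i)$ shows that $\EE[N_0(\hat{P}_i,\Csf_i)] \geq \tfrac{1}{2}|\Tc_n(\hat{P}_i)|$, so some realization $\Cc_i$ of $\Csf_i$ places at least half of the type-$\hat{P}_i$ sequences in the ``good'' set $\Ac_{i1}$ satisfying \eqref{eqn:bound_sequence_error_type}. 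I would then expurgate the complement and repeat the argument on the residual type class $\Tc_n(\hat{P}_i)\setminus\Ac_{i1}$, noting that the corresponding ensemble $\bar{\Csf}_i$ has $\EE[p_e(\xb,\bar{\Csf}_i)^{1/\rho}] \leq \EE[p_e(\xb,\Csf_i)^{1/\rho}]$ since removing competitors only shrinks the union in the error event. Iterating at most $k_i \leq \log_2 |\Tc_n(\hat{P}_i)| \leq n\log_2|\Xc|$ times exhausts the type class, yielding sub-codes $\Cc_{i1},\ldots,\Cc_{ik_i}$, each of which meets the desired bound with expectation taken over a sub-ensemble of size $\tfrac{M}{k_i|\Pc_n(\Xc)|}$.

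Assembling $\Cc_{{\rm ex},i} = \{\Cc_{i1},\ldots,\Cc_{ik_i}\}$ produces a code for $\Tc_n(\hat{P}_i)$ containing $\tfrac{M}{|\Pc_n(\Xc)|}$ codewords in total, and every $\xb \in \Tc_n(\hat{P}_i)$ satisfies \eqref{eqn:bound_actual_seq_Type}. Running this in parallel over all $|\Pc_n(\Xc)|$ types and setting $\Ccb_{\rm ex}=\{\Cc_{{\rm ex},1},\ldots,\Cc_{{\rm ex},|\Pc_n(\Xc)|}\}$ gives the claimed type-by-type code.

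The main subtlety I would pay attention to is the monotonicity step in the iteration: the argument requires that restricting the source alphabet of the sub-ensemble to the still-uncovered sequences does not inflate the moment $\EE[p_e(\xb,\cdot)^{1/\rho}]$. This is true because the corresponding error event in \eqref{eqn:prob_union_error_events} becomes a subset of the original one; once this monotonicity is recorded, the rest is essentially a copy of the standard-ensemble proof, with the only additional bookkeeping being the partition of $\Mc_i$ into $k_i$ equal-sized pieces to match the expectation bound at each round.
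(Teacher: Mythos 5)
Your proposal is correct and follows essentially the same route as the paper: a per-type application of Markov's inequality with $\eta=2$, iterated expurgation of the uncovered half using the monotonicity $\mathbb{E}\big[p_e(\xb,\bar{\Csf}_i)^{1/\rho}\big]\leq\mathbb{E}\big[p_e(\xb,\Csf_i)^{1/\rho}\big]$, and concatenation of the $k_i$ per-round sub-codes within each type and then across types, with the same codeword bookkeeping $\tfrac{M}{k_i|\Pc_n(\Xc)|}$ per round. No gaps to report.
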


\section{Conclusions}
We have introduced an expurgation method for source coding that is valid for general source and side information models and arbitrary decoding metrics. Building on the developed expurgation method, we have derived dual-domain multi-letter upper bounds on the error probability of an expurgated code. We have further specialized the bounds to memoryless source models and memoryless mismatched decoding metrics and derived two achievable exponents for the standard and type-by-type random coding ensembles; the latter is shown to coincide with the Csisz\'ar and K\"orner's exponent. We have shown existence of a code in each ensemble that achieves the maximum of the corresponding random coding and expurgated exponent. While the expurgated exponent of the standard ensemble is shown to be weaker of the two, its derivation is relatively simpler and does not rely on the method of types.

\section{Proof of Theorem \ref{Thm:Exponent_St}}
\label{sec:proof_th1}

Lemma \ref{Lem:expurgation_St} showed the existence of a good code in the standard random coding ensemble which satisfies an upper bound on the error probability for every source sequence. Here we show that this code achieves the exponent of Theorem \ref{Thm:Exponent_St}. We first show the achievability of $E_{q,\mathrm{ex}}(R)$ and then $E_{q,\mathrm{r}}(R)$.

We start by bounding $p_e(\xb,\Cc)$ for a given source sequence $\xb$ and given standard code $\Cc$ as follows

\begin{align}
	p_e(\xb,\Cc)&=\mathbb{P}\left[\bigcup_{\substack{\barxb\in\Xcn \\ \barxb\neq\xb}}\left\{q(\barxb,\Yb)\geq q(\xb,\Yb) , \phi(\barxb)=\phi(\xb)\right\}\right]\\
	&=\sum\limits_{\yb\in\Ycn}P_{\Yb|\Xb}(\yb|\xb)\mathds{1}\left[\bigcup_{\substack{\barxb\in\Xcn \\ \barxb\neq\xb}}\left\{q(\barxb,\yb)\geq q(\xb,\yb) , \phi(\barxb)=\phi(\xb)\right\}\right]\\
	&\leq\sum\limits_{\yb\in\Ycn}P_{\Yb|\Xb}(\yb|\xb)\sum\limits_{\substack{\barxb\in\Xcn \\ \barxb\neq\xb}}\mathds{1}\left[q(\barxb,\yb)\geq q(\xb,\yb) , \phi(\barxb)=\phi(\xb)\right]\label{eqn:union_bound}\\
	&=\sum\limits_{\yb\in\Ycn}P_{\Yb|\Xb}(\yb|\xb)\sum\limits_{\substack{\barxb\in\Xcn \\ \barxb\neq\xb}}\mathds{1}\left[q(\barxb,\yb)\geq q(\xb,\yb) \right] \mathds{1}\left[\phi(\barxb)=\phi(\xb)\right]\\
	&\leq\sum\limits_{\substack{\barxb\in\Xcn \\ \barxb\neq\xb}}\mathds{1}\left[\phi(\barxb)=\phi(\xb)\right]\sum\limits_{\yb\in\Ycn}P_{\Yb|\Xb}(\yb|\xb)\left(\frac{q(\barxb,\yb)}{q(\xb,\yb)}\right)^{s}\label{eqn:introduce_s},
\end{align}
where we use union bound in \eqref{eqn:union_bound} and \eqref{eqn:introduce_s} holds for any $s\geq0$.

Now considering the ensemble of random standard block source codes and denoting the induced random encoding function by $\Phi(\cdot)$, we upper bound the $\mathbb{E}\left[p_e(\xb,\Csf)^{\frac{1}{\rho}}\right]$ using \eqref{eqn:introduce_s} as follows

\begin{align}
	\mathbb{E}\left[p_e(\xb,\Csf)^{\frac{1}{\rho}}\right]&\leq\mathbb{E}\left[\left(\sum\limits_{\substack{\barxb\in\Xcn \\ \barxb\neq\xb}}\mathds{1}\left[\Phi(\barxb)=\Phi(\xb)\right]\sum\limits_{\yb\in\Ycn}P_{\Yb|\Xb}(\yb|\xb)\left(\frac{q(\barxb,\yb)}{q(\xb,\yb)}\right)^{s}\right)^{\frac{1}{\rho}}\right]\\
	&\leq\sum\limits_{\barxb\in\Xcn}\mathbb{E}\left[\mathds{1}\left[\Phi(\barxb)=\Phi(\xb)\right]\right]\left(\sum\limits_{\yb\in\Ycn}P_{\Yb|\Xb}(\yb|\xb)\left(\frac{q(\barxb,\yb)}{q(\xb,\yb)}\right)^s\right)^{\frac{1}{\rho}}\label{eqn:introduce_inequality_rho}\\
	&\leq\frac{k}{M}\sum\limits_{\barxb\in\Xcn}\left(\sum\limits_{\yb\in\Ycn}P_{\Yb|\Xb}(\yb|\xb)\left(\frac{q(\barxb,\yb)}{q(\xb,\yb)}\right)^s\right)^{\frac{1}{\rho}}\label{eqn:convenient_upperbound}
\end{align}
where \eqref{eqn:introduce_inequality_rho} follows from inequality $(\sum\limits_i a_i)^{\frac{1}{\rho}}\leq\sum\limits_i a_i^{\frac{1}{\rho}}$ for $\rho\geq 1$ and including $\barxb=\xb$ in the summation, \eqref{eqn:convenient_upperbound} follows from $\mathbb{E}\left[\mathds{1}\left[\Phi(\barxb)=\Phi(\xb)\right]\right]\leq\frac{k}{M}$ where $k=n\log_2|\Xc|$ is an upper bound on the number of iterations in expurgation method.

Now substituting \eqref{eqn:convenient_upperbound} in \eqref{eqn:bound_actual_seq_St} from Lemma \ref{Lem:expurgation_St} and summing over all source sequences we find an upper bound on the error probability of the codebook $\Cc_{\rm ex}$ as 

\begin{align}
	p_e(\Cc_{\rm ex})&=\sum\limits_{\xb\in\Xcn}P_{\Xb}(\xb)p_e(\xb,\Cc_{\rm ex})\\
	&\leq\sum\limits_{\xb\in\Xcn}P_{\Xb}(\xb)\Big(2\mathbb{E}\left[p_e(\xb,\Csf)^{\frac{1}{\rho}}\right]\Big)^{\rho}\\
	&\leq\left(\frac{2k}{M}\right)^{\rho}\sum\limits_{\xb\in\Xcn}\left(\sum\limits_{\barxb\in\Xcn}\left(\sum\limits_{\yb\in\Ycn}P_{\Xb\Yb}(\xb,\yb)\left(\frac{q(\barxb,\yb)}{q(\xb,\yb)}\right)^s\right)^{\frac{1}{\rho}}\right)^{\rho}\label{eqn:bound_expurgated}
\end{align}

Equation \eqref{eqn:bound_expurgated} is valid for any discrete source and any decoding metric. We now specialize this to the case of memoryless sources and metrics, using $q(\xb,\yb)=\prod_{i=1}^{n}q(x_i,y_i)$. Therefore we obtain
\begin{align}
	p_e(\Cc_{\rm ex})&\leq\left(\frac{2k}{M}\right)^{\rho}\left(\sum\limits_{x\in\Xc}\left(\sum\limits_{\bar{x}\in\Xc}\left(\sum\limits_{y\in\Yc}P_{XY}(x,y)\left(\frac{q(\bar{x},y)}{q(x,y)}\right)^s\right)^{\frac{1}{\rho}}\right)^{\rho}\right)^{n}\\
	&=e^{-n(\rho (R-\delta_n)-E_{\rm{x}}(\rho,s))}
\end{align}
where $R=\frac{\log M}{n}$, $\delta_n=\frac{\log(2n\log_2|\Xc|)}{n}\to 0$ as $n\to\infty$ and
\begin{equation}
	E_{\rm{x}}(\rho,s)=\log\sum\limits_{x\in\Xc}\left(\sum\limits_{\bar{x}\in\Xc}\left(\sum\limits_{y\in\Yc}P_{XY}(x,y)\left(\frac{q(\bar{x},y)}{q(x,y)}\right)^s\right)^{\frac{1}{\rho}}\right)^{\rho},\label{eqn:Ex_St}
\end{equation}

Hence, we show the achievability of the exponent $E_{q,\mathrm{ex}}(R)$ by optimizing over $\rho,s$ as
\begin{equation}
	E_{q,\mathrm{ex}}(R)=\max_{\rho\geq 1,s\geq 0}\rho R-E_{\rm{x}}(\rho,s).\label{eqn:expurgated_exponent}
\end{equation}

To show that the same code of Lemma \ref{Thm:Exponent_St} achieves $E_{q,\mathrm{r}}(R)$ we use \eqref{eqn:bound_actual_seq_St} of Lemma \ref{Lem:expurgation_St} with $\rho=1$, and obtain

\begin{align}
	p_e(\xb,\Cc_{\rm ex})\leq2\mathbb{E}\left[P_e(\xb,\Csf)\right].\label{eqn:special_case_st}
\end{align}

Averaging over all source sequences we obtain
\begin{align}
	p_e(\Cc_{\rm ex})&=\sum\limits_{\xb\in\Xcn}P_{\Xb}(\xb)p_e(\xb,\Cc_{\rm ex})\\
	&\leq2\sum\limits_{\xb\in\Xcn}P_{\Xb}(\xb)\mathbb{E}\left[P_e(\xb,\Csf)\right]\\
	&=2\mathbb{E}\left[\sum\limits_{\xb\in\Xcn}P_{\Xb}(\xb)P_e(\xb,\Csf)\right]\\
	&=2\mathbb{E}\left[p_e(\Csf)\right],\label{eqn:ensemble_average}
\end{align}
which shows that the error probability of $\Cc_{\rm ex}$ is upper bounded by twice the ensemble average error probability.

Now we derive an upper bound on the ensemble average error probability following the derivations in \cite{gallager1979source}.

We start by bounding $p_e(\yb,\Cc)$ for a given side information sequence $\yb$ and given standard code $\Cc$ as follows

\begin{align}
	p_e(\yb,\Cc)&=\sum\limits_{\xb\in\Xcn}P_{\Xb|\Yb}(\xb|\yb)\mathds{1}\left[\bigcup_{\substack{\barxb\in\Xcn \\ \barxb\neq\xb}}\left\{q(\barxb,\yb)\geq q(\xb,\yb) , \phi(\barxb)=\phi(\xb)\right\}\right]\\
	&\leq\sum\limits_{\xb\in\Xcn}P_{\Xb|\Yb}(\xb|\yb)\left(\sum_{\substack{\barxb\in\Xcn \\ \barxb\neq\xb}}\mathds{1}\left[q(\barxb,\yb)\geq q(\xb,\yb) , \phi(\barxb)=\phi(\xb)\right]\right)^\rho\label{eqn:use_inequality}\\
	&=\sum\limits_{\xb\in\Xcn}P_{\Xb|\Yb}(\xb|\yb)\left(\sum_{\substack{\barxb\in\Xcn \\ \barxb\neq\xb}}\mathds{1}\left[q(\barxb,\yb)\geq q(\xb,\yb)\right]\mathds{1}\left[ \phi(\barxb)=\phi(\xb)\right]\right)^\rho\\
	&\leq\sum\limits_{\xb\in\Xcn}P_{\Xb|\Yb}(\xb|\yb)\left(\sum_{\substack{\barxb\in\Xcn \\ \barxb\neq\xb}}\left(\frac{q(\barxb,\yb)}{q(\xb,\yb)}\right)^{s}\mathds{1}\left[ \phi(\barxb)=\phi(\xb)\right]\right)^\rho,\label{eqn:bound_with_s}
\end{align}
where \eqref{eqn:use_inequality} follows from using the inequality $\mathds{1}\left[\bigcup\limits_{i}A_i\right]\leq\left(\sum\limits_{i}\mathds{1}\left[A_{i}\right]\right)^{\rho}$ for any set of events $\{A_i\}$ and $\rho\in[0,1]$ and \eqref{eqn:bound_with_s} holds for any $s\geq0$.

Now considering the ensemble of random standard block source codes we upper bound the $\mathbb{E}\left[p_e(\yb,\Csf)\right]$ using \eqref{eqn:bound_with_s} as follows
\begin{align}
	\mathbb{E}\left[p_e(\yb,\Csf)\right]&\leq\sum\limits_{\xb\in\Xcn}P_{\Xb|\Yb}(\xb|\yb)\mathbb{E}\left[\left(\sum_{\substack{\barxb\in\Xcn \\ \barxb\neq\xb}}\left(\frac{q(\barxb,\yb)}{q(\xb,\yb)}\right)^{s}\mathds{1}\left[ \Phi(\barxb)=\Phi(\xb)\right]\right)^\rho\right]\\
	&\leq\sum\limits_{\xb\in\Xcn}P_{\Xb|\Yb}(\xb|\yb)\left(\sum_{\substack{\barxb\in\Xcn \\ \barxb\neq\xb}}\left(\frac{q(\barxb,\yb)}{q(\xb,\yb)}\right)^{s}\mathbb{E}\left[\mathds{1}\left[ \Phi(\barxb)=\Phi(\xb)\right]\right] \right)^\rho\label{eqn:jensen_ineq}\\
	&\leq\left(\frac{k}{M}\right)^{\rho}\sum\limits_{\xb\in\Xcn}P_{\Xb|\Yb}(\xb|\yb)\left(\sum_{\barxb\in\Xcn}\left(\frac{q(\barxb,\yb)}{q(\xb,\yb)}\right)^{s}\right)^\rho\label{eqn:bound_fixed_side_information},
\end{align}
where \eqref{eqn:jensen_ineq} follows from Jensen's inequality and the concavity of $x^{\rho}$ for $\rho\in\left[0,1\right]$ and \eqref{eqn:bound_fixed_side_information} follows from $\mathbb{E}\left[\mathds{1}\left[ \Phi(\barxb)=\Phi(\xb)\right]\right]\leq\frac{k}{M}$ and including $\barxb=\xb$ in the summation.

Averaging over all side information sequences we obtain an upper bound on the ensemble average error probability as
\begin{align}
	\mathbb{E}\left[p_e(\Csf)\right]&=\sum\limits_{\yb\in\Ycn}P_{\Yb}(\yb)\mathbb{E}\left[p_e(\yb,\Csf)\right]\\
	&\leq\left(\frac{k}{M}\right)^{\rho}\sum\limits_{\xb\in\Xcn,\yb\in\Ycn}P_{\Xb\Yb}(\xb,\yb)\left(\sum_{\barxb\in\Xcn}\left(\frac{q(\barxb,\yb)}{q(\xb,\yb)}\right)^{s}\right)^\rho\label{eqn:random_coding_bound}
\end{align}
where \eqref{eqn:random_coding_bound} holds for any $\rho\in\left[0,1\right]$ and $s\geq0$. Introducing \eqref{eqn:random_coding_bound} in \eqref{eqn:ensemble_average} and particularizing it to the case of memoryless sources and metrics we obtain
\begin{align}
	p_e(\Cc_{\rm ex})&\leq2\left(\frac{k}{M}\right)^{\rho}\left(\sum\limits_{x\in\Xc,y\in\Yc}P_{XY}(x,y)\left(\sum\limits_{\bar{x}\in\Xc}\left(\frac{q(\bar{x},y)}{q(x,y)}\right)^s\right)^{\rho}\right)^{n}\\
	&=e^{-n(\rho (R-\delta_n)-E_{\rm{s}}(\rho,s)-\delta'_n)}
\end{align}
where $\delta_n=\frac{\log k}{n}\to 0$ and $\delta'_n=\frac{\log 2}{n}\to 0$ as $n\to\infty$ and 
\begin{equation}
	E_{\rm{s}}(\rho,s)=\log\sum\limits_{x\in\Xc,y\in\Yc}P_{XY}(x,y)\left(\sum\limits_{\bar{x}\in\Xc}\left(\frac{q(\bar{x},y)}{q(x,y)}\right)^s\right)^{\rho}
\end{equation}
Hence, we show the achievability of the exponent $E_{q,\mathrm{r}}(R)$ by optimizing over $\rho,s$ as in \eqref{eqn:exp_St_RC}.

\section{Proof of Theorem \ref{Thm:Exponent_Type}}
\label{sec:proof_th2}

Lemma \ref{Lem:expurgation_Type} showed the existence of a good code in the type-by-type random coding ensemble which satisfies an upper bound on the error probability for every source sequence. Here we show that this code achieves the exponent of Theorem \ref{Thm:Exponent_Type}. As for the proof in the previous section, we first show the achievability of $E^{\rm tt}_{q,\mathrm{ex}}(R)$ and then that of $E^{\rm tt}_{q,\mathrm{r}}(R)$.

We start by bounding $P_e(\xb,\Cc_i)$ for a given source sequence $\xb\in\Tc_n(\hat{P}_{i})$ and given type-by-type code $\Ccb$ with subcode $\Cc_i$ for type $\hat{P}_{i}$ as follows
\begin{align}
	P_e(\xb,\Cc_i)&=\mathbb{P}\left[\bigcup_{\substack{\barxb\in\Tc_n(\hat{P}_i)\\\barxb\neq\xb}}\left\{q(\barxb,\Yb)\geq q(\xb,\Yb) , \phi_i(\barxb)=\phi_i(\xb)\right\}\right]\\
	&=\sum\limits_{\yb\in\Ycn}P_{\Yb|\Xb}(\yb|\xb)\mathds{1}\left[\bigcup_{\substack{\barxb\in\Tc_n(\hat{P}_i)\\\barxb\neq\xb}}\left\{q(\barxb,\Yb)\geq q(\xb,\Yb) , \phi_i(\barxb)=\phi_i(\xb)\right\}\right]\\
	&\leq\sum\limits_{\yb\in\Ycn}P_{\Yb|\Xb}(\yb|\xb)\sum\limits_{\substack{\barxb\in\Tc_n(\hat{P}_i)\\\barxb\neq\xb}}\mathds{1}\left[q(\barxb,\yb)\geq q(\xb,\yb) , \phi_i(\barxb)=\phi_i(\xb)\right]\label{eqn:union_bound_type}\\
	&=\sum\limits_{\yb\in\Ycn}P_{\Yb|\Xb}(\yb|\xb)\sum\limits_{\substack{\barxb\in\Tc_n(\hat{P}_i)\\\barxb\neq\xb}}\mathds{1}\left[q(\barxb,\yb)\geq q(\xb,\yb) \right] \mathds{1}\left[\phi_i(\barxb)=\phi_i(\xb)\right]\\
	&\leq\sum\limits_{\substack{\substack{\barxb\in\Tc_n(\hat{P}_i)\\\barxb\neq\xb}}}\mathds{1}\left[\phi_i(\barxb)=\phi_i(\xb)\right]\sum\limits_{\yb\in\Ycn}P_{\Yb|\Xb}(\yb|\xb)\left(\frac{q(\barxb,\yb)}{q(\xb,\yb)}\right)^{s}\label{eqn:introduce_s_type}
\end{align}
where we use union bound in \eqref{eqn:union_bound_type} and \eqref{eqn:introduce_s_type} holds for any $s\geq0$.

Now considering the ensemble of random type-by-type codes and denoting the induced random encoding function by $\Phi_i(\cdot)$, we upper bound the $\mathbb{E}\left[P_e(\xb,\Csf_i)^{\frac{1}{\rho}}\right]$ using \eqref{eqn:introduce_s_type} as follows 

\begin{align}
	\mathbb{E}\left[P_e(\xb,\Csf_i)^{\frac{1}{\rho}}\right]&\leq\mathbb{E}\left[\left(\sum\limits_{\substack{\substack{\barxb\in\Tc_n(\hat{P}_i)\\\barxb\neq\xb}}}\mathds{1}\left[\Phi_i(\barxb)=\Phi_i(\xb)\right]\sum\limits_{\yb\in\Ycn}P_{\Yb|\Xb}(\yb|\xb)\left(\frac{q(\barxb,\yb)}{q(\xb,\yb)}\right)^{s}\right)^{\frac{1}{\rho}}\right]\\
	&\leq\sum\limits_{\barxb\in\Tc_n(\hat{P}_i)}\mathbb{E}\left[\mathds{1}\left[\Phi_i(\barxb)=\Phi_i(\xb)\right]\right]\left(\sum\limits_{\yb\in\Ycn}P_{\Yb|\Xb}(\yb|\xb)\left(\frac{q(\barxb,\yb)}{q(\xb,\yb)}\right)^s\right)^{\frac{1}{\rho}}\label{eqn:introduce_inequality_rho_type}\\
	&\leq\frac{k_i|\Pc_n(\Xc)|}{M}\sum\limits_{\barxb\in\Tc_n(\hat{P}_i)}\left(\sum\limits_{\yb\in\Ycn}P_{\Yb|\Xb}(\yb|\xb)\left(\frac{q(\barxb,\yb)}{q(\xb,\yb)}\right)^s\right)^{\frac{1}{\rho}}\label{eqn:convenient_upperbound_type}
\end{align}
where \eqref{eqn:introduce_inequality_rho_type} follows from inequality $(\sum\limits_i a_i)^{\frac{1}{\rho}}\leq\sum\limits_i a_i^{\frac{1}{\rho}}$ for $\rho\geq 1$ and including $\barxb=\xb$ in the summation, \eqref{eqn:convenient_upperbound_type} follows from $\mathbb{E}\left[\mathds{1}\left[\Phi_i(\barxb)=\Phi_i(\xb)\right]\right]\leq\frac{k_i|\Pc_n(\Xc)|}{M}$ where $k_i=\log_{2}|\Tc_n(\hat{P}_i)|$ is an upper bound on the number of iterations in expurgation method for the type-by-type coding.

Now substituting \eqref{eqn:convenient_upperbound_type} in \eqref{eqn:bound_actual_seq_Type} from Lemma \ref {Lem:expurgation_Type} we find an upper bound on the error probability of every sequence $\xb\in\Tc_n(\hat{P}_i)$ in the codebook $\Cc_{{\rm ex},i}$ as

\begin{align}
	p_e(\xb,\Cc_{{\rm ex},i})\leq\left(\frac{2k_i|\Pc_n(\Xc)|}{M}\sum\limits_{\barxb\in\Tc_n(\hat{P}_i)}\left(\sum\limits_{\yb\in\Ycn}P_{\Yb|\Xb}(\yb|\xb)\left(\frac{q(\barxb,\yb)}{q(\xb,\yb)}\right)^s\right)^{\frac{1}{\rho}}\right)^{\rho}\label{eqn:bound_sequence_expurgated}
\end{align}

Notice that using \eqref{eqn:bound_sequence_expurgated} and summing over all source sequences $\xb\in\Tc_n(\hat{P}_i)$ we can find an upper bound that depends on the type $\hat{P}_i$. Therefore, resulting upper bound on the error probability of the codebook $\Cc$ requires maximization over the type. In order to find a simpler bound we can weaken the bound in \eqref{eqn:bound_sequence_expurgated} by including all $\barxb$ in the sum, however to keep the bound tight we introduce ratio of an arbitrary cost function as $\frac{e^{a(\barxb)}}{e^{a(\xb)}}$ where the cost function $a(\xb)$ depends on the sequence $\xb$ only through its type. Observe that the ratio is equal to $1$ when both $\xb$ and $\barxb$ have the same type, also this cost function can be optimized for to obtain the tightest bound. By upper-bounding $k_i$ by $k=n\log_{2}|\Xc|$ we obtain
\begin{align}
	p_e(\Ccb_{\rm ex})&=\sum\limits_{i=1}^{|\Pc_n(\Xc)|}\sum\limits_{\xb\in\Tc_n(\hat{P}_i)}P_{\Xb}(\xb)p_e(\xb,\Cc_{{\rm ex},i})\\&\leq\left(\frac{2k|\Pc_n(\Xc)|}{M}\right)^{\rho}\sum\limits_{\xb\in\Xcn}P_{\Xb}(\xb)\left(\sum\limits_{\barxb\in\Xcn}\left(\sum\limits_{\yb\in\Ycn}P_{\Yb|\Xb}(\yb|\xb)\frac{e^{a(\barxb)}}{e^{a(\xb)}}\left(\frac{q(\barxb,\yb)}{q(\xb,\yb)}\right)^s\right)^{\frac{1}{\rho}}\right)^{\rho}.\label{eqn:bound_expurgated_Type}
\end{align}

Equation \eqref{eqn:bound_expurgated_Type} is valid for any discrete source and any decoding metric. We now specialize this to the case of memoryless sources, metrics, and cost functions using $q(\xb,\yb)=\prod_{i=1}^{n}q(x_i,y_i)$ and $a(\xb)=\sum\limits_{i=1}^{n}a(x_i)$. Therefore we obtain
\begin{align}
	p_e(\Ccb_{\rm ex})&\leq\left(\frac{2k|\Pc_n(\Xc)|}{M}\right)^{\rho}\left(\sum\limits_{x\in\Xc}P_{X}(x)\left(\sum\limits_{\bar{x}\in\Xc}\left(\sum\limits_{y\in\Yc}P_{Y|X}(y|x)\frac{e^{a(\bar{x})}}{e^{a(x)}}\left(\frac{q(\bar{x},y)}{q(x,y)}\right)^s\right)^{\frac{1}{\rho}}\right)^{\rho}\right)^{n}\\
	&=e^{-n(\rho (R-\delta_n)-E^{\rm tt}_{\rm{x}}(\rho,s,a))}
\end{align}
where $\delta_n=\frac{\log2k|\Pc_n(\Xc)|}{n}\to 0$ as $n\to\infty$ and 
\begin{equation}
	E^{\rm tt}_{\rm{x}}(\rho,s,a(\cdot))=\log\sum\limits_{x\in\Xc}P_{X}(x)\left(\sum\limits_{\bar{x}\in\Xc}\left(\sum\limits_{y\in\Yc}P_{Y|X}(y|x)\frac{e^{a(\bar{x})}}{e^{a(x)}}\left(\frac{q(\bar{x},y)}{q(x,y)}\right)^s\right)^{\frac{1}{\rho}}\right)^{\rho}.\label{eqn:Ex_Type}
\end{equation}
Hence, we show the achievability of the exponent $E_{q,\mathrm{ex}}(R)$ by optimizing over $\rho,s,a(\cdot)$ as
\begin{equation}
	E^{\rm tt}_{q,\mathrm{ex}}(R)=\max_{\rho\geq 1,s\geq 0,a(\cdot)}\rho R-E_{\rm{x}}(\rho,s,a(\cdot)).\label{eqn:expurgated_exponent_Type}
\end{equation}

To show that the same code of Lemma \ref{Thm:Exponent_Type} achieves $E^{\rm tt}_{q,\mathrm{r}}(R)$ we use \eqref{eqn:bound_actual_seq_Type} of Lemma \ref{Lem:expurgation_Type} with $\rho=1$, and obtain that for every source sequence $\xb\in\Tc_n(\hat{P}_i)$
\begin{align}
	p_e(\xb,\Ccb_{\rm ex})\leq2\mathbb{E}\left[P_e(\xb,\Csf_i)\right].\label{eqn:special_case}
\end{align}

Averaging over all source sequences similar to the proof of Theorem \ref{th:standard} we obtain
\begin{align}
	p_e(\Ccb_{\rm ex})\leq2\mathbb{E}\left[p_e(\Csfb)\right],\label{eqn:ensemble_average_tbt}
\end{align}
which shows that the error probability of $\Ccb_{\rm ex}$ is upper bounded by twice the average error probability over type-by-type ensemble.

Now we derive an upper bound on the type-by-type ensemble average error probability following similar steps as Section \ref{sec:proof_th1}.

We start by bounding $p_e(\yb,\Ccb)$ for a given side information sequence $\yb$ and given type-by-type code $\Ccb$ as follows

\begin{align}
	p_e(\yb,\Ccb)&=\sum\limits_{i=1}^{|\Pc_n(\Xc)|}\sum\limits_{\xb\in\Tc_n(\hat{P}_i)}P_{\Xb|\Yb}(\xb|\yb)\mathds{1}\left[\bigcup_{\substack{\barxb\in\Tc_n(\hat{P}_i) \\ \barxb\neq\xb}}\left\{q(\barxb,\yb)\geq q(\xb,\yb) , \phi_i(\barxb)=\phi_i(\xb)\right\}\right]\\
	&\leq\sum\limits_{i=1}^{|\Pc_n(\Xc)|}\sum\limits_{\xb\in\Tc_n(\hat{P}_i)}P_{\Xb|\Yb}(\xb|\yb)\left(\sum_{\substack{\barxb\in\Tc_n(\hat{P}_i) \\ \barxb\neq\xb}}\mathds{1}\left[q(\barxb,\yb)\geq q(\xb,\yb) , \phi_i(\barxb)=\phi_i(\xb)\right]\right)^\rho\label{eqn:use_inequality_tbt}\\
	&\leq\sum\limits_{i=1}^{|\Pc_n(\Xc)|}\sum\limits_{\xb\in\Tc_n(\hat{P}_i)}P_{\Xb|\Yb}(\xb|\yb)\left(\sum_{\substack{\barxb\in\Tc_n(\hat{P}_i) \\ \barxb\neq\xb}}\left(\frac{q(\barxb,\yb)}{q(\xb,\yb)}\right)^{s}\mathds{1}\left[ \phi_i(\barxb)=\phi_i(\xb)\right]\right)^\rho,\label{eqn:bound_with_s_tbt}
\end{align}
where \eqref{eqn:use_inequality_tbt} follows from using the inequality $\mathds{1}\left[\bigcup\limits_{i}A_i\right]\leq\left(\sum\limits_{i}\mathds{1}\left[A_{i}\right]\right)^{\rho}$ for any set of events $\{A_i\}$ and $\rho\in[0,1]$ and \eqref{eqn:bound_with_s_tbt} holds for any $s\geq0$.

Now considering the ensemble of random type-by-type block source codes we upper bound the $\mathbb{E}\left[p_e(\yb,\Csfb)\right]$ using \eqref{eqn:bound_with_s_tbt} as follows
\begin{align}
	\mathbb{E}\left[p_e(\yb,\Csfb)\right]&\leq\sum\limits_{i=1}^{|\Pc_n(\Xc)|}\sum\limits_{\xb\in\Tc_n(\hat{P}_i)}P_{\Xb|\Yb}(\xb|\yb)\mathbb{E}\left[\left(\sum_{\substack{\barxb\in\Tc_n(\hat{P}_i) \\ \barxb\neq\xb}}\left(\frac{q(\barxb,\yb)}{q(\xb,\yb)}\right)^{s}\mathds{1}\left[ \Phi_i(\barxb)=\Phi_i(\xb)\right]\right)^\rho\right]\\
	&\leq\sum\limits_{i=1}^{|\Pc_n(\Xc)|}\sum\limits_{\xb\in\Tc_n(\hat{P}_i)}P_{\Xb|\Yb}(\xb|\yb)\left(\sum_{\substack{\barxb\in\Tc_n(\hat{P}_i) \\ \barxb\neq\xb}}\left(\frac{q(\barxb,\yb)}{q(\xb,\yb)}\right)^{s}\mathbb{E}\left[\mathds{1}\left[ \Phi_i(\barxb)=\Phi_i(\xb)\right]\right] \right)^\rho\label{eqn:jensen_ineq_tbt}\\
	&\leq\left(\frac{k_i|\Pc_n(\Xc)|}{M}\right)^{\rho}\sum\limits_{i=1}^{|\Pc_n(\Xc)|}\sum\limits_{\xb\in\Tc_n(\hat{P}_i)}P_{\Xb|\Yb}(\xb|\yb)\left(\sum_{\barxb\in\Tc_n(\hat{P}_i)}\left(\frac{q(\barxb,\yb)}{q(\xb,\yb)}\right)^{s}\right)^\rho\label{eqn:bound_fixed_side_information_tbt},\\
	&\leq\left(\frac{k|\Pc_n(\Xc)|}{M}\right)^{\rho}\sum\limits_{\xb\in\Xcn}P_{\Xb|\Yb}(\xb|\yb)\left(\sum_{\barxb\in\Xcn}\frac{e^{a(\barxb)}}{e^{a(\xb)}}\left(\frac{q(\barxb,\yb)}{q(\xb,\yb)}\right)^{s}\right)^\rho\label{eqn:bound_itroduce_cost_tbt},
\end{align}
where \eqref{eqn:jensen_ineq_tbt} follows from Jensen's inequality and the concavity of $x^{\rho}$ for $\rho\in\left[0,1\right]$ and \eqref{eqn:bound_fixed_side_information_tbt} follows from $\mathbb{E}\left[\mathds{1}\left[ \Phi(\barxb)=\Phi(\xb)\right]\right]\leq\frac{k}{M}$ and \eqref{eqn:bound_itroduce_cost_tbt} follows from weakening the bound by upper bounding $k_i$ with $k=n\log_{2}|\Xc|$ and including all $\barxb$ in the sum, however to keep the bound tight we introduce ratio of an arbitrary cost function as $\frac{e^{a(\barxb)}}{e^{a(\xb)}}$ where the cost function depends on the sequence $\xb$ only through its type..

Averaging over all side information sequences we obtain an upper bound on the type-by-type ensemble average error probability as
\begin{align}
	\mathbb{E}\left[p_e(\Csfb)\right]&=\sum\limits_{\yb\in\Ycn}P_{\Yb}(\yb)\mathbb{E}\left[p_e(\yb,\Csfb)\right]\\
	&\leq\left(\frac{k|\Pc_n(\Xc)|}{M}\right)^{\rho}\sum\limits_{\xb\in\Xcn,\yb\in\Ycn}P_{\Xb\Yb}(\xb,\yb)\left(\sum_{\barxb\in\Xcn}\frac{e^{a(\barxb)}}{e^{a(\xb)}}\left(\frac{q(\barxb,\yb)}{q(\xb,\yb)}\right)^{s}\right)^\rho\label{eqn:random_coding_bound_Type}
\end{align}
where \eqref{eqn:random_coding_bound_Type} holds for any $\rho\in\left[0,1\right]$ and $s\geq0$. Introducing \eqref{eqn:random_coding_bound_Type} in \eqref{eqn:ensemble_average_tbt} and particularizing it to the case of memoryless sources, metrics and cost functions we obtain
\begin{align}
	p_e(\Ccb_{\rm ex})&\leq2\left(\frac{k|\Pc_n(\Xc)|}{M}\right)^{\rho}\left(\sum\limits_{x\in\Xc,y\in\Yc}P_{XY}(x,y)\left(\sum\limits_{\bar{x}\in\Xc}\frac{e^{a(\bar{x})}}{e^{a(x)}}\left(\frac{q(\bar{x},y)}{q(x,y)}\right)^s\right)^{\rho}\right)^{n}\\
	&=e^{-n(\rho (R-\delta_n)-E^{\rm tt}_{\rm{s}}(\rho,s,a)-\delta'_n)}
\end{align}
where $\delta_n=\frac{\log k|\Pc_n(\Xc)|}{n}\to 0$ and $\delta'_n=\frac{\log 2}{n}\to 0$ as $n\to\infty$ and 
\begin{equation}
	E^{\rm tt}_{\rm{s}}(\rho,s,a(\cdot))=\log\sum\limits_{x\in\Xc,y\in\Yc}P_{XY}(x,y)\left(\sum\limits_{\bar{x}\in\Xc}\frac{e^{a(\bar{x})}}{e^{a(x)}}\left(\frac{q(\bar{x},y)}{q(x,y)}\right)^s\right)^{\rho}
\end{equation}
Hence, we show the achievability of the exponent $E^{\rm tt}_{q,\mathrm{r}}(R)$ by optimizing over $\rho,s,a(\cdot)$ as in \eqref{eqn:exp_Type_RC}.

\section{Proof of Proposition \ref{pr:equaiv}: Primal-dual Equivalence}
\label{sec:proof_primal_dual}

We start by showing the equivalence of \eqref{eq:RC_CK_SW_exponent} to \eqref{eqn:exp_Type_RC}. We rewrite \eqref{eq:RC_CK_SW_exponent} as
\begin{equation}
	E^{\mathrm{ck}}_{q,\mathrm{r}}(R)= \min_{Q}\left[D(Q\|P_{X})+ E_{q,\mathrm{r}}^{\mathrm{cc}}\left(H(Q)-R,Q,P_{Y|X}\right)\right],\label{eqn:primal_form_rc_exponent}
\end{equation}
\begin{equation}
	E_{q,\mathrm{r}}^{\mathrm{cc}}(H(Q)-R,Q,P_{Y|X})=\min_{P_{\hat{X}\tilde{X}\tilde{Y}}\in\mathcal{T}(Q)}\left[D(P_{\tilde{Y}|\tilde{X}}\|P_{Y|X}|P_{\tilde{X}})+|I(\tilde{Y};\hat{X})+R-H(Q)|^{+}\right],\label{eqn:Channel_CC_rc_exponent}
\end{equation}
where we use $P_{\tilde{X}}=Q$ and the identity
\begin{equation}
	D(P_{\tilde{X}\tilde{Y}}\|P_{XY})=D(P_{\tilde{X}}\|P_{X})+D(P_{\tilde{Y}|\tilde{X}}\|P_{Y|X}|P_{\tilde{X}}).\label{eqn:identity}
\end{equation}
to split the minimization in \eqref{eq:RC_CK_SW_exponent}.

Defining 
\begin{equation}
	\Sc(Q)=\left\lbrace P_{\tilde{X}\tilde{Y}}\in\Pc(\Xc\times\Yc):P_{\tilde{X}}=Q\right\rbrace,
\end{equation}
\begin{equation}
	\Tc(P_{\tilde{X}\tilde{Y}},Q)=\left\lbrace \tilde{P}_{\hat{X}\tilde{Y}}\in\Pc(\Xc\times\Yc):\tilde{P}_{\hat{X}}=Q,\tilde{P}_{\tilde{Y}}=P_{\tilde{Y}},\mathbb{E}_{\tilde{P}}\left[\log q(\hat{X},\tilde{Y})\right]\geq\mathbb{E}_{P}\left[\log q(\tilde{X},\tilde{Y})\right]\right\rbrace,
\end{equation}
we can rewrite \eqref{eqn:Channel_CC_rc_exponent} as
\begin{equation}
	E_{q,\mathrm{r}}^{\mathrm{cc}}(H(Q)-R,Q,P_{Y|X})=\min_{P_{\tilde{X}\tilde{Y}}\in\mathcal{S}(Q)}\min_{\tilde{P}_{\hat{X}\tilde{Y}}\in\Tc(P_{\tilde{X}\tilde{Y}},Q)}\left[D(P_{\tilde{Y}|\tilde{X}}\|P_{Y|X}|P_{\tilde{X}})+|I(\tilde{Y};\hat{X})+R-H(Q)|^{+}\right].\label{eqn:Channel_CC_rc_exp_break}
\end{equation}

The dual form of the constant composition random coding error exponent in \eqref{eqn:Channel_CC_rc_exp_break} is derived in \cite{scarlett2014_1}, using which we rewrite it as
\begin{align}
	E_{q,\mathrm{r}}^{\mathrm{cc}}(H(Q)-R,Q,P_{Y|X})=\sup_{\rho\in\left[0,1\right]}E_{0}^{\mathrm{cc}}(Q,\rho)-\rho(H(Q)-R),
\end{align}
where
\begin{align}
	E_{0}^{\mathrm{cc}}(Q,\rho)=\sup_{s\geq 0,b(\cdot)}-\sum\limits_{x\in\Xc}Q(x)\log\sum\limits_{y\in\Yc}P_{Y|X}(y|x)\left(\sum\limits_{\bar{x}\in\Xc}Q(\bar{x})\frac{e^{b(\bar{x})}}{e^{b(x)}}\left(\frac{q(\bar{x},y)}{q(x,y)}\right)^s\right)^{\rho}.\label{eqn:RC_cc_channel}
\end{align}
Defining $e^{a(x)}=Q(x)e^{b(x)}$, we can rewrite \eqref{eqn:Ex_cc_channel} as
\begin{align}
	E_{0}^{\mathrm{cc}}(Q,\rho)&=\sup_{s\geq 0,a(\cdot)}-\sum\limits_{x\in\Xc}Q(x)\log\sum\limits_{y\in\Yc}P_{Y|X}(y|x)\left(\sum\limits_{\bar{x}\in\Xc}Q(x)\frac{e^{a(\bar{x})}}{e^{a(x)}}\left(\frac{q(\bar{x},y)}{q(x,y)}\right)^s\right)^{\rho}\\
	&=\rho H(Q)+\sup_{s\geq 0,a(\cdot)}-\sum\limits_{x\in\Xc}Q(x)\log\sum\limits_{y\in\Yc}P_{Y|X}(y|x)\left(\sum\limits_{\bar{x}\in\Xc}\frac{e^{a(\bar{x})}}{e^{a(x)}}\left(\frac{q(\bar{x},y)}{q(x,y)}\right)^s\right)^{\rho}\label{eqn:RC_cc_rewrite}
\end{align}

Introducing \eqref{eqn:RC_cc_rewrite} in \eqref{eqn:RC_cc_channel}, noting that the objective in \eqref{eqn:RC_cc_channel} is concave in $Q$ and using Fan's minimax theorem interchanging the minimum and supremum we obtain
\begin{align}
	E_{\rm{ex}}(R)=\sup_{\rho\geq 1}\left[\rho R+\sup_{s\geq 0,a(\cdot)}\min_{Q}L\right],\label{eqn:RC_exp_primal}
\end{align}
where 
\begin{equation}
	L=\sum\limits_{x\in\Xc}Q(x)\log\frac{Q(x)}{P_{X}(x)}-\sum\limits_{x\in\Xc}Q(x)\log\sum\limits_{y\in\Yc}P_{Y|X}(y|x)\left(\sum\limits_{\bar{x}\in\Xc}\frac{e^{a(\bar{x})}}{e^{a(x)}}\left(\frac{q(\bar{x},y)}{q(x,y)}\right)^s\right)^{\rho}.\label{eqn:minimization_objective_rc}
\end{equation}

Setting $\frac{\partial L}{\partial Q}=0$ we find the minimizing distribution as the following mismatched tilted distribution
\begin{equation}
	Q(x)=CP_{X}(x)\sum\limits_{y\in\Yc}P_{Y|X}(y|x)\left(\sum\limits_{\bar{x}\in\Xc}\frac{e^{a(\bar{x})}}{e^{a(x)}}\left(\frac{q(\bar{x},y)}{q(x,y)}\right)^s\right)^{\rho}\label{eqn:minimizing_distribution_rc}
\end{equation}
where $C$ is the normalization constant given by
\begin{equation}
	C^{-1}=\sum\limits_{x\in\Xc}P_{X}(x)\sum\limits_{y\in\Yc}P_{Y|X}(y|x)\left(\sum\limits_{\bar{x}\in\Xc}\frac{e^{a(\bar{x})}}{e^{a(x)}}\left(\frac{q(\bar{x},y)}{q(x,y)}\right)^s\right)^{\rho}.
\end{equation}

From \eqref{eqn:minimizing_distribution_rc} we have $\frac{Q(x)}{CP_{X}(x)}=\sum\limits_{y\in\Yc}P_{Y|X}(y|x)\left(\sum\limits_{\bar{x}\in\Xc}\frac{e^{a(\bar{x})}}{e^{a(x)}}\left(\frac{q(\bar{x},y)}{q(x,y)}\right)^s\right)^{\rho}$, replacing it in \eqref{eqn:minimization_objective_rc} we obtain
\begin{align}
	\min_{Q}L=-\log \sum\limits_{x\in\Xc}P_{X}(x)\sum\limits_{y\in\Yc}P_{Y|X}(y|x)\left(\sum\limits_{\bar{x}\in\Xc}\frac{e^{a(\bar{x})}}{e^{a(x)}}\left(\frac{q(\bar{x},y)}{q(x,y)}\right)^s\right)^{\rho}\label{eqn:min_L_rc}
\end{align}
Plugging \eqref{eqn:min_L_rc} into \eqref{eqn:RC_exp_primal} we obtain the dual form of the type-by-type random coding exponent in \eqref{eqn:exp_Type_RC}.

To show the equivalence of \eqref{eq:EX_CK_SW_exponent} to \eqref{eqn:exp_Type_EX}. We rewrite \eqref{eq:EX_CK_SW_exponent} as

\begin{equation}
	E^{\mathrm{ck}}_{q,\rm ex}(R) = \min_{Q}\left[D(Q\|P_{X})+ E_{q,\mathrm{ex}}^{\mathrm{cc}}\left(H(Q)-R,Q,P_{Y|X}\right)\right],\label{eqn:primal_form_expurgated_exponent}
\end{equation}
where
\begin{equation}
	E_{q,\mathrm{ex}}^{\mathrm{cc}}(H(Q)-R,Q,P_{Y|X})=\min_{\substack{P_{\hat{X}\tilde{X}\tilde{Y}}\in\mathcal{T}(Q)\\H(\hat{X}|\tilde{X})\geq R}}\left[D(P_{\tilde{Y}|\tilde{X}}\|P_{Y|X}|P_{\tilde{X}})+I(\tilde{X}\tilde{Y};\hat{X})+R-H(Q)\right]\label{eqn:Channel_CC_expurgated_exponent}
\end{equation}
and
\begin{equation}
	\Tc(Q)=\left\lbrace P_{\hat{X}\tilde{X}\tilde{Y}}\in\Pc(\Xc\times\Xc\times\Yc):P_{\hat{X}}=P_{\tilde{X}}=Q,\mathbb{E}\left[\log q(\hat{X},\tilde{Y})\right]\geq\mathbb{E}\left[\log q(\tilde{X},\tilde{Y})\right]\right\rbrace,
\end{equation}
where we use $P_{\tilde{X}}=Q$ and the identity \eqref{eqn:identity} to split the minimization in \eqref{eq:EX_CK_SW_exponent}.

Defining 
\begin{equation}
	\Sc(Q)=\left\lbrace \tilde{P}_{\hat{X}\tilde{X}}\in\Pc(\Xc\times\Xc):\tilde{P}_{\hat{X}}=\tilde{P}_{\tilde{X}}=Q\right\rbrace,
\end{equation}
\begin{equation}
	\Tc(\tilde{P}_{\hat{X}\tilde{X}})=\left\lbrace P_{\hat{X}\tilde{X}\tilde{Y}}\in\Pc(\Xc\times\Xc\times\Yc):P_{\hat{X}\tilde{X}}=\tilde{P}_{\hat{X}\tilde{X}},\mathbb{E}\left[\log q(\hat{X},\tilde{Y})\right]\geq\mathbb{E}\left[\log q(\tilde{X},\tilde{Y})\right]\right\rbrace,
\end{equation}
we can rewrite \eqref{eqn:Channel_CC_expurgated_exponent} as
\begin{equation}
	E_{q,\mathrm{ex}}^{\mathrm{cc}}(H(Q)-R,Q,P_{Y|X})=\min_{\substack{\tilde{P}_{\hat{X}\tilde{X}}\in\mathcal{S}(Q)\\H(\hat{X}|\tilde{X})\geq R}}\min_{P_{\hat{X}\tilde{X}\tilde{Y}}\in\Tc(\tilde{P}_{\hat{X}\tilde{X}})}\left[D(P_{\hat{X}\tilde{X}\tilde{Y}}\|\tilde{P}_{\hat{X}\tilde{X}}\times P_{Y|X})+R-H(\hat{X}|\tilde{X})\right]\label{eqn:Channel_CC_expurgated_exp_broken}
\end{equation}
where we use \cite[Eq. (32)]{Csi81} as
\begin{equation}
	D(P_{\tilde{Y}|\tilde{X}}\|P_{Y|X}|P_{\tilde{X}})+I(\tilde{X}\tilde{Y};\hat{X})=D(P_{\hat{X}\tilde{X}\tilde{Y}}\|\tilde{P}_{\hat{X}\tilde{X}}\times P_{Y|X})+I(\tilde{X};\hat{X}).
\end{equation}

For a given $\tilde{P}_{\hat{X}\tilde{X}}\in\Sc(Q)$, $R-H(\hat{X}|\tilde{X})$ is constant, therefore we consider the optimization problem
\begin{equation}
	\min_{P_{\hat{X}\tilde{X}\tilde{Y}}\in\Tc(\tilde{P}_{\hat{X}\tilde{X}})}D(P_{\hat{X}\tilde{X}\tilde{Y}}\|\tilde{P}_{\hat{X}\tilde{X}}\times P_{Y|X}).
\end{equation}

The dual of this optimization problem has been found in \cite[Appendix B]{scarlett2014}. Using very similar arguments we find the following equivalent forms of \eqref{eqn:Channel_CC_expurgated_exp_broken} as
\begin{align}
	E_{q,\mathrm{ex}}^{\mathrm{cc}}(H(Q)-R,Q,P_{Y|X})&=\sup_{s\geq0}\min_{\substack{P_{X\bar{X}}:P_{X}=Q,P_{\bar{X}}=Q\\H(\bar{X}|X)\geq R}}\mathbb{E}_{P}\left[d_{s}(X,\bar{X})\right]+R-H(\bar{X}|X),\\
	&=\sup_{\rho\geq1}E_{x}^{\mathrm{cc}}(Q,\rho)-\rho(H(Q)-R)\label{eqn:lagrange_dual},
\end{align}
where
\begin{equation}
	d_s(x,\bar{x})=-\log\sum\limits_{y}P_{Y|X}(y|x)\left(\frac{q(\bar{x},y)}{q(x,y)}\right)^{s},
\end{equation}
and
\begin{align}
	E_{\rm{x}}^{\mathrm{cc}}(Q,\rho)=\sup_{s\geq 0,b(\cdot)}-\rho\sum\limits_{x\in\Xc}Q(x)\log\sum\limits_{\bar{x}\in\Xc}Q(\bar{x})\left(\sum\limits_{y\in\Yc}P_{Y|X}(y|x)\frac{e^{b(\bar{x})}}{e^{b(x)}}\left(\frac{q(\bar{x},y)}{q(x,y)}\right)^s\right)^{\frac{1}{\rho}}.\label{eqn:Ex_cc_channel}
\end{align}

Using \eqref{eqn:lagrange_dual} in \eqref{eqn:primal_form_expurgated_exponent} we obtain
\begin{align}
	E^{\mathrm{ck}}_{\rm{ex}}(R)=\min_{Q}\left[ D(Q\|P_{X})+\sup_{\rho\geq 1}\left[E_{x}^{\mathrm{cc}}(Q,\rho)-\rho(H(Q)-R)\right] \right],\label{eqn:Expurgated_exp_primal}
\end{align}
Defining $e^{a(x)}=Q(x)^{\rho}e^{b(x)}$, we can rewrite \eqref{eqn:Ex_cc_channel} as
\begin{align}
	E_{\rm{x}}^{\mathrm{cc}}(Q,\rho)&=\sup_{s\geq 0,a(\cdot)}-\rho\sum\limits_{x\in\Xc}Q(x)\log\sum\limits_{\bar{x}\in\Xc}Q(x)\left(\sum\limits_{y\in\Yc}P_{Y|X}(y|x)\frac{e^{a(\bar{x})}}{e^{a(x)}}\left(\frac{q(\bar{x},y)}{q(x,y)}\right)^s\right)^{\frac{1}{\rho}}\\
	&=\rho H(Q)+\sup_{s\geq 0,a(\cdot)}-\rho\sum\limits_{x\in\Xc}Q(x)\log\sum\limits_{\bar{x}\in\Xc}\left(\sum\limits_{y\in\Yc}P_{Y|X}(y|x)\frac{e^{a(\bar{x})}}{e^{a(x)}}\left(\frac{q(\bar{x},y)}{q(x,y)}\right)^s\right)^{\frac{1}{\rho}}\label{eqn:Ex_cc_rewrite}
\end{align}

Introducing \eqref{eqn:Ex_cc_rewrite} in \eqref{eqn:Expurgated_exp_primal}, noting that the objective in \eqref{eqn:Expurgated_exp_primal} is concave in $Q$ and using Fan's minimax theorem interchanging the minimum and supremum we obtain
\begin{align}
	E^{\mathrm{ck}}_{\rm{ex}}(R)=\sup_{\rho\geq 1}\left[\rho R+\sup_{s\geq 0,a(\cdot)}\min_{Q}L\right],\label{eqn:exp_dual}
\end{align}
where 
\begin{equation}
	L=\sum\limits_{x\in\Xc}Q(x)\log\frac{Q(x)}{P_{X}(x)}-\rho\sum\limits_{x\in\Xc}Q(x)\log\sum\limits_{\bar{x}\in\Xc}\left(\sum\limits_{y\in\Yc}P_{Y|X}(y|x)\frac{e^{a(\bar{x})}}{e^{a(x)}}\left(\frac{q(\bar{x},y)}{q(x,y)}\right)^s\right)^{\frac{1}{\rho}}\label{eqn:minimization_objective}
\end{equation}
Setting $\frac{\partial L}{\partial Q}=0$ we find the minimizing distribution as the following mismatched tilted distribution
\begin{equation}
	Q(x)=CP_{X}(x)\left(\sum\limits_{\bar{x}\in\Xc}\left(\sum\limits_{y\in\Yc}P_{Y|X}(y|x)\frac{e^{a(\bar{x})}}{e^{a(x)}}\left(\frac{q(\bar{x},y)}{q(x,y)}\right)^s\right)^{\frac{1}{\rho}}\right)^\rho\label{eqn:minimizing_distribution}
\end{equation}
where $C$ is the normalization constant given by
\begin{equation}
	C^{-1}=\sum\limits_{x\in\Xc}P_{X}(x)\left(\sum\limits_{\bar{x}\in\Xc}\left(\sum\limits_{y\in\Yc}P_{Y|X}(y|x)\frac{e^{a(\bar{x})}}{e^{a(x)}}\left(\frac{q(\bar{x},y)}{q(x,y)}\right)^s\right)^{\frac{1}{\rho}}\right)^\rho.
\end{equation}

From \eqref{eqn:minimizing_distribution} we have $\frac{Q(x)}{CP_{X}(x)}=\left(\sum\limits_{\bar{x}\in\Xc}\left(\sum\limits_{y\in\Yc}P_{Y|X}(y|x)\frac{e^{a(\bar{x})}}{e^{a(x)}}\left(\frac{q(\bar{x},y)}{q(x,y)}\right)^s\right)^{\frac{1}{\rho}}\right)^\rho$, replacing it in \eqref{eqn:minimization_objective} we obtain
\begin{align}
	\min_{Q}L=-\log \sum\limits_{x\in\Xc}P_{X}(x)\left(\sum\limits_{\bar{x}\in\Xc}\left(\sum\limits_{y\in\Yc}P_{Y|X}(y|x)\frac{e^{a(\bar{x})}}{e^{a(x)}}\left(\frac{q(\bar{x},y)}{q(x,y)}\right)^s\right)^{\frac{1}{\rho}}\right)^\rho\label{eqn:min_L}
\end{align}
Plugging \eqref{eqn:min_L} into \eqref{eqn:exp_dual} we obtain the dual form of the type-by-type expurgated exponent in \eqref{eqn:exp_Type_EX}.

\bibliographystyle{IEEEtran}
\bibliography{itc}

\end{document}